\theoremstyle{plain}
\newtheorem{theorem}{Theorem}[section]
\newtheorem{lemma}[theorem]{Lemma}
\newtheorem{proposition}[theorem]{Proposition}
\newtheorem{corollary}[theorem]{Corollary}
\theoremstyle{remark}
\newtheorem{remark}[theorem]{Remark}
\theoremstyle{definition}
\newtheorem{assumption}{Assumption}
\newtheorem{condition}[assumption]{Condition}
\newtheorem{definition}[theorem]{Definition}
\newtheorem{example}[theorem]{Example}
\newcommand{\ZZ}{\mathbb{Z}}
\newcommand{\QQ}{\mathbb{Q}}
\newcommand{\RR}{\mathbb{R}}
\newcommand{\CC}{\mathbb{C}}
\newcommand{\NN}{\mathbb{N}}
\newcommand{\EE}{\mathbb{E}}
\newcommand{\PP}{\ensuremath{\mathbb{P}}}
\newcommand{\drm}{\ensuremath{\mathrm{d}}}
\newcommand{\cA}{\mathcal{A}}
\newcommand{\Eins}{\mathbf{1}}
\newcommand{\cF}{\mathcal{F}}
\newcommand{\cB}{\mathcal{B}}
\newcommand{\Pro}{\ensuremath{p}}
\newcommand{\Inc}{\ensuremath{\iota}}
\DeclareMathOperator{\cov}{cov}
\DeclareMathOperator{\supp}{\operatorname{supp}}
\DeclareMathOperator*{\esssup}{ess\,sup}
\renewcommand{\epsilon}{\varepsilon}
\title[Discrete alloy-type models: Regularity of distributions]{Discrete alloy-type models: Regularity of distributions and recent results}
\author{Martin Tautenhahn}
\author{Ivan Veseli\'c}
\address{Technische Universit\"at Chemnitz, Fakult\"at f\"ur Mathematik, D-09107 Chemnitz}
\date{}
\begin{document}
\begin{abstract}\noindent
We consider discrete random Schr\"odinger operators on $\ell^2 (\ZZ^d)$ with a potential of discrete alloy-type structure. 
That is, the potential at lattice site $x \in \ZZ^d$ is given by a linear combination of independent identically distributed random variables, possibly
 with sign-changing coefficients. In a first part we show that the discrete alloy-type model is not uniformly $\tau$-H\"older continuous, a frequently used condition in the literature of Anderson-type models with general random potentials. In a second part we review recent results on regularity properties of spectral data and localization properties for the discrete alloy-type model.
\end{abstract}
\maketitle
\section{Introduction}
We consider discrete Schr\"odinger operators on $\ell^2(\ZZ^d)$, where $\ZZ^d$ is the $d$-dimensional integer lattice.
The potential of the discrete Schr\"odinger operator is given by a stochastic field. 
Thus we are dealing with generalizations of the standard Anderson model.
We are interested in properties of distributions of spectral data of such random operators, as well as of their restrictions to finite cubes $\Lambda \subset \ZZ^d$. 
An appropriate control of these distributions allows one to conclude almost sure spectral and dynamical localization 
for the random Schr\"odinger operator on $\ell^2(\ZZ^d)$.
More precisely, we discuss in the paper the following issues:
\begin{itemize}
 \item 
We review results on discrete alloy-type models proven in \cite{ElgartTV-10,ElgartTV-11} and \cite{PeyerimhoffTV-11,LeonhardtPTV-13}.
They concern localization criteria based on the multiscale analysis and the fractional moment method and can be considered as generalizations of earlier results obtained in \cite{Veselic-10a,Veselic-10b,TautenhahnV-10}.
Also, we discuss related results in the recent paper \cite{ElgartSS-12} and highlight the role of a reverse H\"older inequality 
in the argument of \cite{ElgartSS-12}.
\item
We present Minami estimates and Poisson statistics of eigenvalues proven in \cite{TautenhahnV-13}
for a class of discrete alloy-type models. This extend results of \cite{Minami-96} beyond the Anderson model.
\item
Alloy-type potentials are a specific type of a correlated stochastic field.
While there are abstract localization results in the literature concerning 
correlated random potentials, they rarely cover those of alloy-type.
We show this by checking the relevant regularity properties of the conditional distributions of the stochastic field. 
\item
This prompts a careful consideration of conditional distributions. 
In the literature on random Schr\"odinger operators these are sometimes not treated correctly.
We show how to deal with certain measurability issues and give a (counter)example, 
which shows how badly conditional distributions may behave,  even for innocently looking alloy-type potentials.
\end{itemize}
Let us put these statements into context. 
The Anderson model is a Schr\"odinger operators on $\ell^2(\ZZ^d)$ with potential given by an independent identically distributed (i.i.d.) sequence of random variables
$V_\cdot(x), x\in \ZZ^d$. One expects that for energies near the infimum and supremum of the spectrum, as well as at large disorder, 
this model exhibits localization, i.e.~discrete spectrum with exponentially decaying  eigenfunctions, almost surely.
However, proofs of this claim depend on regularity conditions on the distribution of random variables. 
For instance, for the Anderson model with Bernoulli distributed variables so far localization was proven only in one space dimension.
If we consider more general random potentials, where each random variable $V_\cdot(x), x\in \ZZ^d$, has the same marginal distribution $\nu$, 
but they are no longer independent, additional distinctions are necessary.
Now one has to impose regularity conditions on the finite-dimensional distributions of the process  $V_\cdot(x), x\in \ZZ^d$, 
i.e. the joint distribution of a finite subcollection of random variables.
Alternatively, one can formulate regularity hypotheses on the conditional distributions.
While the ultimate goal is to formulate regularity hypotheses which can be  used to derive localization, 
an intermediate step is to derive regularity of spectral data. More precisely, 
one wants to show, that if the distribution of the stochastic process  $V_\cdot(x), x\in \ZZ^d$, is sufficiently regular, 
then the distribution of spectral data is so as well. Abstractly speaking: the pushforward map preserves the regularity of probability measures.
Let us give an illustration. If the distribution function of $\nu$ is Lipschitz continuous, then the integrated density of states inherits this property.
This is called a Wegner estimate.
\par
We review here a number of positive results in this direction for correlated fields  $V_\cdot(x), x\in \ZZ^d$, which arise
as an alloy-type potential. In stochastic data analysis such models are known as (multidimensional) moving average processes.
The mentioned results include Wegner estimates, uniform bounds on fractional moments and exponential decay of fractional moments.
This are results obtained in \cite{ElgartTV-11,PeyerimhoffTV-11,LeonhardtPTV-13}.
They have been extended in the recent papers \cite{Krueger-12} and \cite{ElgartSS-12}. The latter one will be discussed in Section~\ref{s:ESS}.
In particular, for the case of alloy-type potentials with large disorder we give a short and direct modification of the proof of  
a subharmonicity inequality crucial for the fractional moment method. Furthermore, we single out a reverse H\"older-inequality
as the pivot estimate in the strategy of \cite{ElgartSS-12}. 
\par
While the Wegner estimate concerns a bound on the probability of finding an eigenvalue at all in an energy interval,
the Minami estimate bounds the the probability of finding at least two eigenvalue in an energy interval.
For a specific class of discrete alloy-type models Minami's result \cite{Minami-96} has been generalized in \cite{TautenhahnV-13}.
We discuss this in Section~\ref{s:regularity}. The implications for the asymptotic statistics of eigenvalues is presented in Section~\ref{s:physical_implications}.
\par
The papers \cite{DreifusK-91,AizenmanM-93,AizenmanG-98,Hundertmark-00,AizenmanSFH-01,Hundertmark-08} give abstract regularity conditions, formulated in terms of conditional distributions, 
which ensure localization for discrete Schr\"odinger operators with random potential.
We show that these regularity conditions are not satisfied for alloy-type potentials with bounded values, see Section~\ref{s:regularity_pot} for a precise statement. A very interesting borderline behavior is encountered for alloy-type potentials with Gaussian coupling constants. In this case the above mentioned regularity conditions may be satisfied or not, depending on the specific choice of the single-site potential.
In Section~\ref{s:regularity_pot} we also show how to define carefully the associated concentration function or modulus of continuity. 
This concerns the measurability of a supremum over an uncountable set. Such measurability issues are encountered in other areas of probability theory, 
for instance in the context of the Glivenko-Cantelli Theorem or the definition of Markov transition kernels. 
Using regular versions of the conditional expectation we show in detail how to define the concentration functions rigorously.
The reason to devote so much attention to this topic is, that in the literature on random Schr\"odinger operators conditional distributions are not always 
treated correctly. 
There is the misconception that a moving average process inherits the regularity properties of the i.i.d random variables on which it is based.
Section~\ref{s:regularity_pot} shows that if one starts with regularly distributed i.i.d.~random variables and uses them to define a discrete alloy-type potential 
(or moving average process) the resulting conditional distributions are quite singular. These results have been formulated before in the technical reports \cite{TautenhahnV-10b,TautenhahnV-13b}.

To summarize, multidimensional Anderson models without independence condition are still not very well understood.
Exceptions are Gaussian processes treated rigorously in \cite{DreifusK-91,AizenmanSFH-01} and discrete alloy-type potentials
treated in the above mentioned papers.
\section{Notation and model}
\subsection{General random Schr\"odinger operators on $\ZZ^d$}
Let $(\Omega , \mathcal{A} , \PP)$ be a probability space and $\eta_k : (\Omega , \mathcal{A}) \to (\RR , \mathcal{B} (\RR))$, $k \in \ZZ^d$, be real-valued random variables. We define the product space $Z = \times_{k \in \ZZ^d} \RR$ equipped with the product $\sigma$-algebra $\mathcal{Z} = \otimes_{k \in \ZZ^d} \mathcal{B} (\RR)$. The collection $(\eta_k)_{k \in \ZZ^d}$ will be denoted by 
\[
\eta := (\eta_k)_{k \in \ZZ^d} : (\Omega , \mathcal{A}) \to (Z , \mathcal{Z}) .
\]
The expectation with respect to the probability measure $\PP$ will be denoted by $\EE$.
A discrete random Schr\"odinger operator is given by a family of self-adjoint operators
\begin{equation}\label{eq:hamiltonian}
H_\omega = -\Delta + \lambda V_\omega, \quad \omega \in \Omega ,
\end{equation}
on $\ell^2 (\ZZ^d)$. Here $\lambda > 0$ measures the strength of the disorder present in the model, $\Delta$ denotes the discrete Laplace operator and $V_\omega$ is a multiplication operator. They are defined by
\[
 (\Delta \psi) (x) = \sum_{\lvert y-x \rvert_1} \psi (y), \quad \text{and} \quad (V_\omega \psi)(x) = \eta_x(\omega) \psi (x) .
\]
We assume that $H_\omega$ is for each $\omega \in \Omega$ a self-adjoint operator (on some dense domain $D_\omega \subset \ell^2 (\ZZ^d)$). This is for example satisfied if the random potentials $\eta_k$, $k \in \ZZ^d$ are uniformly bounded random variables. If the potential values are not uniformly bounded, we recall that $H_\omega$ is essentially self-adjoint on the set of compactly supported functions, see e.g.\ \cite{Kirsch-08}. 

For the operator $H_\omega$ in \eqref{eq:hamiltonian} and $z \in \CC \setminus \sigma
(H_\omega)$ we define the corresponding \emph{resolvent} by $G_\omega (z)
= (H_\omega - z)^{-1}$. For the \emph{Green function}, which assigns
to each  $(x,y) \in \ZZ^d \times \ZZ^d$ the corresponding matrix element of the
resolvent, we use the notation
\begin{equation*} \label{eq:greens}
G_\omega (z;x,y) := \langle \delta_x , (H_\omega - z)^{-1}\delta_y \rangle.
\end{equation*}
For $\Gamma \subset \ZZ^d$, $\delta_k \in \ell^2 (\Gamma)$ denotes the
Dirac function given by $\delta_k (k) = 1$ for $k \in \Gamma$ and
$\delta_k (j) = 0$ for $j \in \Gamma \setminus \{k\}$.
Let $\Gamma \subset \ZZ^d$. We define the canonical restriction $\Pro_{\Gamma} : \ell^2 (\ZZ^d) \to \ell^2 (\Gamma)$ by
\[
 \Pro_{\Gamma} \psi := \sum_{k \in \Gamma} \psi (k) \delta_k ,
\]
where the Dirac function has to be understood as an element of $\ell^2 (\Gamma)$.
Note that the corresponding embedding $\Inc_{\Gamma} := (\Pro_{\Gamma})^* : \ell^2 (\Gamma) \to \ell^2 (\ZZ^d)$ is given by
\[
\Inc_{\Gamma} \phi := \sum_{k \in \Gamma} \phi (k) \delta_k ,
\]
where here the Dirac function has to be understood as an element of $\ell^2 (\ZZ^d)$.
For an arbitrary set $\Gamma \subset \ZZ^d$ we define the restricted operators $\Delta_\Gamma, V_{\omega , \Gamma}, H_{\omega , \Gamma}:\ell^2 (\Gamma) \to \ell^2 (\Gamma)$ by $\Delta_\Gamma := \Pro_\Gamma \Delta \Inc_\Gamma$, $V_{\omega , \Gamma} := \Pro_\Gamma V_\omega \Inc_\Gamma$ and
\[
 H_{\omega , \Gamma} := \Pro_\Gamma H_\omega \Inc_\Gamma = -\Delta_\Gamma + \lambda V_{\omega , \Gamma} .
\]
Furthermore, we define $G_\Gamma (z) := (H_\Gamma - z)^{-1}$ and $G_\Gamma (z;x,y) := \bigl\langle \delta_x, G_\Gamma (z) \delta_y \bigr\rangle$ for $z \in \CC \setminus \sigma (H_\Gamma)$ and $x,y \in \Gamma$.
\subsection{Discrete alloy-type model}
Of particular interest will be the case where the random variables $\eta_k$ are given by a linear combination of i.i.d.\ random variables, 
giving rise to a discrete alloy-type potential. While some abstract definitions in Section~\ref{s:regularity_pot} hold for arbitrary random fields $\eta$, our main results concern the discrete alloy-type potential.
\begin{assumption}\label{ass:alloy}
The probability space $(\Omega , \mathcal{A} , \PP)$ is given by the product space $\Omega = \times_{k \in \ZZ^d} \RR$, $\mathcal{A} = \otimes_{k \in \ZZ^d} \mathcal{B} (\RR)$ and $\PP = \otimes_{k \in \ZZ^d} \mu$, where $\mu$ is some probability measure on $\RR$. The random variables $\eta_k : (\Omega, \mathcal{A}) \to (\RR , \mathcal{B} (\RR))$, $k \in \ZZ^d$, are given by
\begin{equation} \label{eq:alloy}
 \eta_k (\omega) = \sum_{i \in \ZZ^d} \omega_i u (x-i) 
\end{equation}
for some summable function $u : \ZZ^d \to \RR$.
\end{assumption}
If Assumption~\ref{ass:alloy} is satisfied, we call the collection of random variables $\eta_k$, $k \in \ZZ^d$, given by Eq.~\eqref{eq:alloy} a \emph{discrete alloy-type potential}, the corresponding family of operators
\[
 H_\omega = -\Delta + \lambda V_\omega, \quad (V_\omega \psi)(x) = \eta_x(\omega)\psi(x), \quad \omega \in \Omega, 
\]
on $\ell^2 (\ZZ^d)$ a \emph{discrete alloy-type model}, and the function $u$ a \emph{single-site potential}. Moreover, we set $\Theta = \supp u$.

In the case where the single-site potential $u=\delta_0$, the random Hamiltonian \eqref{eq:hamiltonian} is exactly the standard Anderson model.
\section{Conditional distributions and modulus of continuity} \label{s:regularity_pot}
\subsection{Definition and main result}
Let $m \in \ZZ^d$, $Z_m^\perp = \times_{k \in \ZZ^d \setminus \{m\}} \RR$ and $\mathcal{Z}_m^\perp = \otimes_{k \in \ZZ^d \setminus \{m\}} \mathcal{B} (\RR)$. We introduce the random variable
\[
 \eta_m^\perp : (\Omega, \cA) \to (Z_m^\perp , \mathcal{Z}_m^\perp), \quad \eta_m^\perp (\omega) = (\eta_k (\omega))_{k \in \ZZ^d \setminus \{ m \}} .
\]
We denote by $\PP_{\eta_m^\perp} : Z_m^\perp \to [0,1]$ the distribution of $\eta_m^\perp$ with respect to $\PP$, i.e.\ $\PP_{\eta_m^\perp} (B) := \PP (\{\omega \in \Omega \colon \eta_m^\perp (\omega) \in B\})$.
For $m \in \ZZ^d$, $a \in \RR$ and $\epsilon > 0$ we define the conditional expectation
\[
  Y_m^{\epsilon,a} := \PP \bigl( \eta_m \in [a,a+\epsilon] \mid \eta_m^\perp \bigr) :=  \EE \bigl( \Eins_{\{ \eta_m \in [a,a+\epsilon] \}} \mid \eta_m^\perp \bigr) .
\]
A conditional expectation $Y_m^{\epsilon,a} = \EE (\Eins_{\{\eta_m \in [a,a+\epsilon]\}} \mid \eta_m^\perp )$ is 
a random variable $Y_m^{\epsilon,a} :\Omega \to [0,1]$ with the property that
\begin{enumerate}[(i)]
\item $Y_m^{\epsilon,a}$ is $\cF$-measurable, where $\cF = \sigma (\eta_m^\perp)$, and that
\item for all $A \in \cF$ we have $\EE(\Eins_{\{\eta_m \in [a,a+\epsilon]\}} \Eins_A) = \EE (Y_m^{\epsilon,a} \Eins_A)$ .
\end{enumerate}
Since 
$\Eins_{\{\eta_m \in [a,a+\epsilon] \}} \in \mathcal{L}^1 (\Omega , \mathcal{A} , \PP)$, $Y_m^{\epsilon , a}$ exists. There may exist several functions $Y_m^{\epsilon,a}$ which satisfy conditions (i) and (ii). They are called \emph{versions} of 
$\EE (\Eins_{\{\eta_m \in [a,a+\epsilon]\}} \mid \eta_m^\perp )$. Two such versions coincide $ \PP$-almost everywhere. For convenience, for each $a \in \RR$ and $\epsilon >0$ we fix one version $Y_m^{\epsilon,a}$ of the conditional expectation.
Since $Y_m^{\epsilon,a}$ is $\cF$-measurable, the factorization lemma tells us that (for each $a$ and $\epsilon$) there is a measurable function $g_m^{\epsilon , a} : (Z_m^\perp , \mathcal{Z}_m^\perp) \to (\RR , \cB (\RR) )$ 
such that $Y_m^{\epsilon,a} = g_m^{\epsilon , a} \circ \eta_m^\perp$, i.e.\ for all $\omega \in \Omega$ we have
\begin{equation} \label{eq:factor}
 Y_m^{\epsilon , a} (\omega) = g_m^{\epsilon , a}(\eta_m^\perp (\omega)) .
\end{equation}
\par
We introduce several quantities used in the literature to describe the regularity of (the conditional distribution)
of the random field $\eta_k$, $k \in \ZZ^d$. 
For $m \in \ZZ^d$ we denote by $S_m \colon [0,\infty) \to [0,1]$,
\[
S_m (\epsilon) := \sup_{a \in \RR} \PP(\{ \omega \in \Omega \colon \eta_m \in [a,a+\epsilon]\}) ,
\]
the \emph{global modulus of continuity} or \emph{the concentration function} of the distribution of $\eta_m$.
For $\Lambda \subset \ZZ^d$ and $\epsilon > 0$ we define
\[
 \hat S_\Lambda (\epsilon) := \sup_{m \in \Lambda} \sup_{a \in \RR} \esssup_{\eta_m^\perp \in Z_m^\perp}   g_m^{\epsilon,a}(\eta_m^\perp) .
\]
Here, the essential supremum refers to the measure $\PP_{\eta_m^\perp}$, that is, 
\[
 \esssup_{\eta_m^\perp \in Z_m^\perp}   g_m^{\epsilon,a}(\eta_m^\perp) = \inf \Bigl\{b \in \RR \colon \PP_{\eta_m^\perp} \bigl(\{\eta_m^\perp \in Z_m^\perp \colon g_m^{\epsilon , a} (\eta_m^\perp) > b\}\bigr) = 0  \Bigr\} .
\]
Denote by $\tilde S_m^\epsilon$ the \emph{conditional global modulus of continuity} or the \emph{conditional concentration function} of the distribution of $\eta_m$, i.e.\ 
\[
\tilde S_m^\epsilon : \Omega \to [0,1], \quad \tilde S_m^\epsilon = \sup_{a \in \RR}  Y_m^{\epsilon , a}. 
\]
Since we are taking here a supremum over an uncountable set, 
it is not clear whether the resulting function is still measurable. In fact, this depends on 
how we chose the version of the conditional expectation (for each of the uncountable many 
$a\in \RR$). We show in Lemma~\ref{lemma:measurable} that if we choose a regular version of $Y_m^{\epsilon , a}$ (which always exists since $\eta_m$ is real-valued), then 
$\tilde S_m^\epsilon $ is $\cF$-measurable. In what follows we always assume that $\tilde S_m^\epsilon $ is $\cF$-measurable and we denote by $g_m^\epsilon : (Z_m^\perp , \mathcal{Z}_m^\perp) \to (\RR , \cB (\RR))$ the measurable function
which comes up with the factorization lemma and satisfies $\tilde S_m^\epsilon = g_m^\epsilon \circ \eta_m^\perp$. Finally, we define
\[
 \tilde S_\Lambda (\epsilon) := \sup_{m \in \Lambda} \esssup_{\eta_m^\perp \in Z_m^\perp} g_m^\epsilon (\eta_m^\perp) ,
\]
where the essential supremum again refers to the measure $\PP_{\eta_m^\perp}$.
\begin{lemma} \label{lemma:measurable}
Let $(\Omega, \cA, \PP )$ be a probability space, $\mathcal{C}\subset \cA$ a $\sigma$-algebra and $X\colon \Omega \to \RR$ a random variable. 
Let further $Q\colon \Omega \times \cB(\RR) \to [0,1]$ be a regular version of the conditional 
distribution of $X$ with respect to $\mathcal{C}$.
Then for all $\epsilon > 0$ the function
\[
 \sup_{a \in \RR} Q (\cdot  , [a,a+\epsilon]) : \Omega \to [0,1]
\]
is $\mathcal{C}$-measurable.
\end{lemma}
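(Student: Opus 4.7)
The plan is to reduce the uncountable supremum to a countable one by exploiting an upper semicontinuity property of $a \mapsto Q(\omega,[a,a+\epsilon])$ in $a$. This property becomes available precisely because $Q$ is a \emph{regular} version of the conditional distribution, so that $Q(\omega,\cdot)$ is a probability measure on $(\RR,\cB(\RR))$ simultaneously for every $\omega \in \Omega$, not merely for almost every $\omega$.

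The first step is to verify that, for each fixed $\omega\in\Omega$, the function $F_\omega(a):=Q(\omega,[a,a+\epsilon])$ is upper semicontinuous. For a sequence $a_n\to a$ and any $\delta>0$ we have $[a_n,a_n+\epsilon]\subset[a-\delta,a+\epsilon+\delta]$ for $n$ sufficiently large, whence
\[
 \limsup_{n\to\infty} F_\omega(a_n) \leq Q(\omega,[a-\delta,a+\epsilon+\delta]).
\]
Sending $\delta\downarrow 0$ and using continuity from above of the finite measure $Q(\omega,\cdot)$ on the nested family $\{[a-\delta,a+\epsilon+\delta]\}_{\delta>0}$, whose intersection is $[a,a+\epsilon]$, gives $\limsup_n F_\omega(a_n)\leq F_\omega(a)$.

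The second step is to pass from the sup over $\RR$ to the sup over $\QQ$. Upper semicontinuity ensures that the superlevel set $\{a\in\RR\colon F_\omega(a)>c\}$ is open for each $c\in\RR$, hence, if non-empty, contains a rational. Consequently $\sup_{a\in\RR}F_\omega(a)>c$ implies $\sup_{a\in\QQ}F_\omega(a)>c$, and the converse is trivial, so for every $\omega$,
\[
 \sup_{a \in \RR} Q(\omega,[a,a+\epsilon]) \;=\; \sup_{a \in \QQ} Q(\omega,[a,a+\epsilon]).
\]
The right-hand side is a countable supremum of the functions $\omega\mapsto Q(\omega,[a,a+\epsilon])$, each of which is $\mathcal{C}$-measurable by the defining property of a regular conditional distribution, and is therefore $\mathcal{C}$-measurable.

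The main obstacle is the semicontinuity step itself: without regularity one could not apply continuity of measures simultaneously for every $\omega$, because the uncountable union of $\PP$-null exceptional sets — one for each $a\in\RR$ — need not be $\PP$-null, nor even measurable. Regularity disposes of this difficulty at once, and such a regular version exists here because $X$ takes values in the Polish space $\RR$.
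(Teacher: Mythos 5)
There is a genuine gap in the second step. From the (correctly established) upper semicontinuity of $F_\omega$ one may conclude that the \emph{sublevel} sets $\{a\in\RR\colon F_\omega(a)<c\}$ are open, equivalently that $\{a\colon F_\omega(a)\geq c\}$ is closed; open \emph{superlevel} sets $\{a\colon F_\omega(a)>c\}$ characterize \emph{lower} semicontinuity, which $F_\omega$ does not possess in general. Indeed, if $Q(\omega,\cdot)=\delta_0$ and $\epsilon=1$, then $F_\omega=\Eins_{[-1,0]}$, which is upper but not lower semicontinuous at the endpoints. As a consequence the identity $\sup_{a\in\RR}Q(\omega,[a,a+\epsilon])=\sup_{a\in\QQ}Q(\omega,[a,a+\epsilon])$ on which your proof rests is false whenever $Q(\omega,\cdot)$ carries atoms at suitably placed irrational points: with $Q(\omega,\cdot)=\frac{1}{2}\delta_{\sqrt 2}+\frac{1}{2}\delta_{\sqrt 2+\epsilon}$ the left-hand side equals $1$, while the right-hand side is $\frac{1}{2}$, because a closed interval of length exactly $\epsilon$ with rational left endpoint can cover at most one of the two atoms.

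A reduction to countable operations is still possible, but the closed interval must be approximated from \emph{outside} and the excess then shrunk. One can check, using only that $Q(\omega,\cdot)$ is a probability measure for every $\omega$, that
\[
\sup_{a\in\RR}Q\bigl(\omega,[a,a+\epsilon]\bigr)=\inf_{n\in\NN}\,\sup_{b\in\QQ}Q\bigl(\omega,[b,b+\epsilon+\tfrac{1}{n}]\bigr):
\]
the inequality ``$\leq$'' follows by choosing, for given $a$ and $n$, a rational $b\in(a-\tfrac{1}{n},a]$; the inequality ``$\geq$'' uses a compactness argument (a bounded maximizing sequence $b_n$ has a limit point $b^*$, and continuity from above gives $Q(\omega,[b^*-\gamma,b^*+\epsilon+\gamma])\downarrow Q(\omega,[b^*,b^*+\epsilon])$ as $\gamma\downarrow 0$; an unbounded sequence is excluded because $Q(\omega,\cdot)$ has total mass one). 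For comparison, the paper's proof also seeks to replace the real supremum by a rational one but approximates $[c,c+\epsilon]$ from \emph{inside} by rational intervals of rational length $\leq\epsilon$; the first displayed equality there is vulnerable to the same atom-at-an-irrational-endpoint objection, so that step too requires the outer-approximation fix just described.
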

For the proof we will use results on the regular version of the condition a distribution of a random variable with respect to a sub-$\sigma$-algebra.
These can be found, e.g., in \S 44 of \cite{Bauer-91}.
\begin{proof}[Proof of Lemma~\ref{lemma:measurable}]
For each $\varepsilon >0$ and $a \in \RR$ 
\[
 \Omega \ni \omega \mapsto Q(\omega, [a, a +\varepsilon])
\]
is $\mathcal{C}$-measurable. Consequently, for each   $\varepsilon >0$ 
\[
 \sup_{b,\delta \in \QQ, \delta \in [0,\varepsilon]} Q(\omega, [b, b +\delta])
\]
is $\mathcal{C}$-measurable as well. It remains to show
 \[
 \sup_{a\in \RR} Q(\omega, [a, a +\varepsilon]) =
 \sup_{b,\delta \in \QQ, \delta \in [0,\varepsilon]} Q(\omega, [b, b +\delta]) .
\]
 Fix $c\in \RR$. 
Since $Q$ is a regular version of the conditional distribution we have for all $\omega \in \Omega$
\[
 Q(\omega, [c, c +\varepsilon])=
\sup_{b,\delta \in \QQ, b \geq c,\delta \geq 0, b+\delta \leq c+\varepsilon} Q(\omega, [b, b +\delta]) .
\]
(For an arbitrary version of the conditional distribution we would have this statement only for almost all $\omega$,
with the exceptional set depending on $c$.)
The last quantity equals
\[
\sup_{b,\delta \in \QQ, b \geq c,\delta \geq 0, b+\delta \leq c+\varepsilon, \delta \leq \varepsilon} Q(\omega, [b, b +\delta])
\]
and is bounded from above by 
\begin{align*}
\sup_{b,\delta \in \QQ, b \geq c,\delta \geq 0, \delta \leq \varepsilon} Q(\omega, [b, b +\delta])
\leq 
\sup_{b,\delta \in \QQ, \delta \geq 0, \delta \leq \varepsilon} Q(\omega, [b, b +\delta])
\\
\leq 
\sup_{b\in \QQ} Q(\omega, [b, b +\varepsilon])
\leq 
\sup_{b\in \RR} Q(\omega, [b, b +\varepsilon]) . 
\end{align*}
This completes the proof.
\end{proof}
The papers \cite{DreifusK-91,AizenmanM-93,AizenmanG-98,Hundertmark-00,AizenmanSFH-01,Hundertmark-08} make use of certain regularity conditions, formulated in terms of the conditional modulus of continuity, which are used to derive localization for Anderson-type models as in \eqref{eq:hamiltonian} with correlated potentials. Since the regularity conditions of the mentioned papers are all in the same flavor, we formulate exemplary the condition from \cite{AizenmanSFH-01}.
\begin{condition}\label{ass:conditional}
The collection $\eta_k$, $k\in \ZZ^d$,
is said to be (uniformly) \emph{$\tau$-H\"older continuous} for $\tau \in(0,1]$ if there is a constant $C$ such that for all $\epsilon > 0$
\[
 \hat S_{\ZZ^d} (\epsilon) := \sup_{m \in \ZZ^d} \sup_{a \in \RR} \esssup_{\eta_m^\perp \in Z_m^\perp} g_m^{\epsilon , a} (\eta_m^\perp) \leq C \epsilon^\tau .
\]
\end{condition}
Our main results on the modulus of continuity is the following theorem. 
It applies to a class of discrete alloy-type potentials, including a case where the measure $\mu$ has unbounded support.
\begin{theorem} \label{thm:result_modulus} Let Assumption~\ref{ass:alloy} be satisfied, $d = 1$ and either
\begin{enumerate}[(a)]
 \item $\Theta = \{0,\ldots , n-1\}$ for some $n \in \NN$, $\sup \supp \mu = 1$ and $\inf \supp \mu = 0$, \textbf{or}
\item $\Theta = \{-1,0\}$, $u(0) = 1$, $\lvert u(1) \rvert^2 = 1$ and $\mu$ be the normal distribution with mean zero and variance $\sigma^2$.
\end{enumerate}
Then for any $\Lambda \subset \ZZ$ and any $\epsilon > 0$ we have
 \begin{equation}\label{eq:first_statement}
   \hat S_\Lambda (\epsilon) := \sup_{m \in \Lambda} \sup_{a \in \RR} \esssup_{\eta_m^\perp \in Z_m^\perp} g_m^{\epsilon , a} (\eta_m^\perp) = 1 
\end{equation}
and
\begin{equation}\label{eq:second_statement}
 \tilde S_\Lambda (\epsilon) := \sup_{m \in \Lambda} \esssup_{\eta_m^\perp \in Z_m^\perp} g_m^{\epsilon} (\eta_m^\perp) = 1 .
\end{equation}

\end{theorem}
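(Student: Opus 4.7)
The plan in both cases is to produce a positive-probability event $E\in\sigma(\eta_m^\perp)$ on which the conditional distribution $\PP(\eta_m\in\cdot\mid\eta_m^\perp)$ is supported in some interval of length $\leq\epsilon$. Once such an $E$ is available, a suitable choice of $a$ (centring that interval) gives $Y_m^{\epsilon,a}=1$ on $E$, whence $\esssup_{\eta_m^\perp}g_m^{\epsilon,a}(\eta_m^\perp)=1$, proving \eqref{eq:first_statement}, and $\sup_a Y_m^{\epsilon,a}=1$ on $E$, proving \eqref{eq:second_statement}. The constructions of $E$ in the two cases are very different.

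For case (b) (reading the evident typo $\lvert u(1)\rvert^2=1$ as $\lvert u(-1)\rvert^2=1$, the only reading consistent with $\Theta=\{-1,0\}$), I would first establish the stronger statement that $\eta_m$ is $\PP$-a.s.\ $\sigma(\eta_m^\perp)$-measurable. The relation $\eta_k=\omega_k+u(-1)\omega_{k+1}$ decouples $\eta_m^\perp$ into a left chain $(\eta_k)_{k<m}$ depending only on $(\omega_j)_{j\leq m}$ and a right chain $(\eta_k)_{k>m}$ depending only on $(\omega_j)_{j\geq m+1}$. Iterating the left recursion gives $\omega_{m-j}=c^j\omega_m+T_j$ for $j\geq 1$, with $c=-u(-1)\in\{\pm 1\}$ and $T_j$ a linear combination of $\eta_{m-1},\ldots,\eta_{m-j}$. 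Evaluating the joint Gaussian density of $(\omega_m,\omega_{m-1},\ldots,\omega_{m-N})$ along this one-parameter affine slice and completing the square (using $c^{2j}=1$) produces a Gaussian conditional density for $\omega_m$ of variance $\sigma^2/(N+1)$. Since the conditional variance in a jointly Gaussian setting is deterministic and monotone non-increasing in the conditioning $\sigma$-algebra, letting $N\to\infty$ forces the conditional variance given $\sigma((\eta_k)_{k<m})$ to vanish, so $\omega_m$ is $\sigma((\eta_k)_{k<m})$-measurable. Symmetrically $\omega_{m+1}$ is $\sigma((\eta_k)_{k>m})$-measurable, and hence $\eta_m=\omega_m+u(-1)\omega_{m+1}$ is $\sigma(\eta_m^\perp)$-measurable.

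For case (a) the Gaussian argument fails and $\eta_m$ is generally \emph{not} $\sigma(\eta_m^\perp)$-measurable (already for $u=(2,1)$ the forward recurrence attenuates rather than amplifies), so I would instead use extremal events. With $M:=\sum_{u(j)>0}u(j)$ the maximal value of any $\eta_k$, write
\[
M-\eta_k=\sum_{u(j)>0}u(j)(1-\omega_{k-j})+\sum_{u(j)<0}|u(j)|\omega_{k-j}
\]
as a sum of nonnegative terms; the inequality $\eta_k\geq M-\delta$ then forces each $\omega_{k-j}$, $j=0,\ldots,n-1$, to lie within $\delta/u_{\min}$ of its extremum $\mathbf{1}_{\{u(j)>0\}}$, where $u_{\min}:=\min_{j\in\Theta}|u(j)|>0$ because $\Theta=\{0,\ldots,n-1\}$. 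Apply this to $k=m+1$ and $k=m-1$: the two index ranges $\{m-n+2,\ldots,m+1\}$ and $\{m-n,\ldots,m-1\}$ between them cover the range $\{m-n+1,\ldots,m\}$ appearing in $\eta_m=\sum_{j=0}^{n-1}u(j)\omega_{m-j}$. Therefore on the $\sigma(\eta_m^\perp)$-measurable event $E_\delta:=\{\eta_{m+1}\geq M-\delta\}\cap\{\eta_{m-1}\geq M-\delta\}$ one has $|\eta_m-M|\leq C\delta$ with $C:=\sum_j|u(j)|/u_{\min}$; moreover $\PP(E_\delta)>0$ because $\{0,1\}\subset\supp\mu$ makes each of the finitely many involved $\omega_k$ close to its required extremum with positive $\mu$-probability. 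Setting $\delta:=\epsilon/C$ and $a:=M-\epsilon$ gives $Y_m^{\epsilon,a}=1$ on $E_\delta$, closing case (a).

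The main obstacle is the construction in case (a). The combinatorial observation that the $\omega$-supports of $\eta_{m+1}$ and $\eta_{m-1}$ between them cover the $\omega$-support of $\eta_m$ is what makes conditioning only on $\eta_m^\perp$ (without access to $\eta_m$) sufficient; the lower bound $u_{\min}>0$ (automatic from $\Theta=\{0,\ldots,n-1\}$) is what converts the single scalar inequality $M-\eta_k\leq\delta$ into a per-coordinate bound on every $\omega_{k-j}$. Without either ingredient the argument breaks.
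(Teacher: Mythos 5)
Your case (b) argument is correct and is a genuine improvement over the paper's route. You prove the stronger statement that $\eta_m$ is (up to a $\PP$-null set) $\sigma(\eta_m^\perp)$-measurable by tracking the deterministic conditional variance $\sigma^2/(N+1)$ along the left chain and invoking monotone convergence of conditional variances; the theorem then follows at once without ever conditioning on the null events $\{\eta_k = 0\}$ that the paper uses (and whose zero probability makes the paper's defining relation $\EE(\Eins_B \Eins_{\{\eta_0\in I\}})=\EE(\Eins_B Y_0^{\epsilon,a})$ vacuously satisfied as written, requiring a regularisation step). Your argument is both cleaner and more robust here.

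Your case (a) argument, however, has a genuine gap, and the choice of conditioning events is where it breaks. You condition on $\{\eta_{m+1}\geq M-\delta\}\cap\{\eta_{m-1}\geq M-\delta\}$. For $n\geq 3$ the $\omega$-supports of $\eta_{m+1}$ and $\eta_{m-1}$ overlap in $\{m-n+2,\ldots,m-1\}$, and for an index $i$ in that overlap the two inequalities force $\omega_i$ towards $\mathbf 1_{\{u(m+1-i)>0\}}$ and towards $\mathbf 1_{\{u(m-1-i)>0\}}$ respectively; these are incompatible whenever $u(j)$ and $u(j-2)$ have opposite signs for some admissible $j$. For example with $u=(1,1,-1)$ the event $E_\delta$ forces $\omega_{m-1}\approx 0$ via $\eta_{m+1}$ and $\omega_{m-1}\approx 1$ via $\eta_{m-1}$, so $\PP(E_\delta)=0$ for small $\delta$ and the whole construction collapses. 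A second, independent error is the claim $\lvert \eta_m-M\rvert\leq C\delta$ on $E_\delta$. The constraints pin $\omega_{m-j}$ to $\mathbf 1_{\{u(j\pm 1)>0\}}$, not to $\mathbf 1_{\{u(j)>0\}}$, so $\eta_m$ concentrates near $\sum_j u(j)\,e_j$ for a sign-dependent choice of $e_j\in\{0,1\}$, which is generally not $M=\sum_{u(j)>0}u(j)$. Concretely, with $u=(1,-1,1)$ the two constraints are consistent but force $\eta_m\approx -1$ while $M=2$, so the interval $[M-\epsilon,M]$ you target has conditional probability $0$, not $1$. The paper avoids both problems by conditioning on $\eta_{m-1}$ and $\eta_{m+n-1}$, whose $\omega$-supports $\{m-n,\ldots,m-1\}$ and $\{m,\ldots,m+n-1\}$ are disjoint (no compatibility constraint to check) and together cover $\{m-n+1,\ldots,m\}$, and by centring at the correct, sign-structure-dependent value $m=\sum_{k\in\Theta_1}u(k)$ rather than at $s^+$. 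Your argument would be repaired by this substitution, but as written it is incorrect for $n\geq 3$ with sign-changing $u$.
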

The above Theorem shows that Condition~\ref{ass:conditional} is not satisfied for the discrete alloy-type potential, if any of the two cases (a) or (b) holds.
This is in sharp contrast to the fact that the concentration function $S_m$ of the distribution of such $\eta_m$ 
may be very well $\tau$-H\"older continuous, as the following example shows.
\begin{example}
 Let Assumption~\ref{ass:alloy} be satisfied, $d=1$, $\Theta = \{0,1\}$, $u(0)=u(1)=1$ and $\mu$ be the uniform distribution on $[0,1]$, which is a special case of case (a) in Theorem~\ref{thm:result_modulus}. Then we have for $m \in \ZZ$ and $\epsilon > 0$
 \begin{align*}
 S_m (\epsilon) = \sup_{a \in \RR} \PP (\{\omega_m + \omega_{m-1} \in [a,a+\epsilon]\}) 
= \begin{cases}                                                                                                                                                                                 
  \epsilon - \frac{\epsilon^2}{4} & \text{if $\epsilon \in (0,2]$,} \\
  1                               & \text{if $\epsilon > 2$.}
  \end{cases}
 \end{align*}
\end{example}
If one considers finite a volume restriction $H_{\omega , \Lambda_L}$, $\Lambda_L = \{y \in \ZZ^d \colon \lvert y \rvert_\infty \leq L\}$, an analogue to Condition~\ref{ass:conditional} which is sufficient for localization would be the following: 
There is some $\tau \in(0,1]$ and a constant $C$ such that
\begin{equation} \label{eq:conditional} 
\sup_{L \in \NN} \sup_{m \in \Lambda_L} \sup_{a \in \RR} \esssup_{(\eta_k)_{k \in \Lambda_L \setminus \{m\}}} \PP \left(\eta_m \in [a,a+\epsilon] \mid (\eta_k)_{k \in \Lambda_L \setminus \{m\}} \right) \leq C \epsilon^\tau .
\end{equation}
As can be seen from the proof of Theorem~\ref{thm:result_modulus}, this condition is also not satisfied for the discrete alloy-type potential if any of the two cases (a) or (b) holds. On the contrary, the proof of Theorem~\ref{thm:result_modulus} suggests that Condition~\ref{ass:conditional} is satisfied for the discrete alloy-type potential, if $\Theta = \{-1,0\}$, $u(0) = 1$, $\lvert u(1) \rvert^2 \not = 1$ and $\mu$ is the normal distribution, see Proposition~\ref{prop:gaussian1} and Remark~\ref{remark:gaussian1} below. 
\par
The result of Theorem~\ref{thm:result_modulus} also shows that the key Lemma~3 in \cite{Klopp-12} is not correct.
Lemma~3 in \cite{Klopp-12} states (in our notation) that the conditional distributions of the random variables $\eta_m$ exhibits qualitatively the same regularity as the distributions of the random variables $\omega_m$. 
\par
The proof of Theorem~\ref{thm:result_modulus} is split into two parts. First we consider  in Subsection~\ref{ss:ecp}
elementary conditional probabilities (conditioned on an event, not on a $\sigma$-algebra)
and derive appropriate bounds. Thereafter we show how to transfer these bounds to 
probabilities conditioned on a  $\sigma$-algebra in Subsection \ref{ss:proof}.
\subsection{Elementary conditional probabilities} 
\label{ss:ecp}

\begin{proposition} \label{lemma:negexample}
 Let Assumption~\ref{ass:alloy} be satisfied, $d = 1$, $\Theta = \{0,1,\dots,n-1\}$ for some $n \in \NN$, $\inf \supp \mu = 0$ and $\sup \supp \mu = 1$. There are constants $c,m,s^+ \in (-\infty,\infty)$, depending only on $u$, such that for all $\delta > 0$ and $\delta \geq \delta' > 0$
\[
 \PP \bigl ( \eta_0 \in [m-c\delta , m + c \delta] \mid \eta_{-1},\eta_{n-1} \in [s^+ - \delta' , s^+]  \bigr ) = 1 .
\]
The values of the constants $c$, $m$ and $s^+$ can be inferred from the proof.
\end{proposition}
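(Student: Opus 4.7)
The plan is to exploit the sparse, nearest-neighbor convolution structure of the alloy potential together with the fact that $\supp \mu \subset [0,1]$ to show that forcing $\eta_{-1}$ and $\eta_{n-1}$ both near their maximum pins down every $\omega$-coordinate that enters $\eta_0$. Unwinding $\eta_k = \sum_{j=0}^{n-1} u(j)\omega_{k-j}$, the random variable $\eta_{-1}$ depends only on $\omega_{-1},\ldots,\omega_{-n}$, the variable $\eta_{n-1}$ only on $\omega_0,\ldots,\omega_{n-1}$, and the target $\eta_0$ only on $\omega_0,\omega_{-1},\ldots,\omega_{-n+1}$. Crucially, every index appearing in $\eta_0$ is contained in the union of the two index sets being conditioned on. I take $s^+ := \sum_{j=0}^{n-1} u(j)_+$, which is the essential supremum of each $\eta_k$, and observe that $\eta_{-1}$ and $\eta_{n-1}$ are independent so that the conditioning event has strictly positive probability.

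The key step is to rewrite $\eta_{n-1} \geq s^+ - \delta'$ as $\sum_{j=0}^{n-1}\bigl(u(j)_+ - u(j)\omega_{n-1-j}\bigr) \leq \delta'$. Each summand is non-negative, being $u(j)(1-\omega_{n-1-j})$ when $u(j)>0$ and $|u(j)|\omega_{n-1-j}$ when $u(j)<0$, since $\omega_i \in [0,1]$ almost surely. A non-negative sum bounded by $\delta'$ forces every summand to be at most $\delta'$, which gives $|\omega_{n-1-j} - t_j| \leq \delta'/|u(j)|$, where $t_j := \mathbf{1}\{u(j)>0\}$. Division is legal because $\Theta = \supp u$ forces $u(j) \neq 0$ for $j \in \{0,\ldots,n-1\}$. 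The analogous manipulation applied to $\eta_{-1} \geq s^+ - \delta'$ yields $|\omega_{-1-j} - t_j| \leq \delta'/|u(j)|$.

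Finally I substitute these bounds into $\eta_0 = u(0)\omega_0 + \sum_{j=1}^{n-1} u(j)\omega_{-j}$. The coordinate $\omega_0$ is controlled by the $\eta_{n-1}$-bound (taking $j=n-1$ in that bound), while for $j=1,\ldots,n-1$ the coordinate $\omega_{-j}$ is controlled by the $\eta_{-1}$-bound (with the index there set to $j-1$). Defining $m$ to be the value of $\eta_0$ when each $\omega$ equals its target and setting $c := |u(0)|/|u(n-1)| + \sum_{j=1}^{n-1} |u(j)|/|u(j-1)|$, the triangle inequality gives $|\eta_0 - m| \leq c\delta' \leq c\delta$ pointwise on the conditioning event. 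Thus $\eta_0 \in [m-c\delta, m+c\delta]$ almost surely on that event, which is precisely the conclusion. The only real obstacle is the index bookkeeping — keeping straight which $\omega_i$ is constrained by which of the two conditioning events and with what coefficient $u(\cdot)$; once that dictionary is in place the rest is a sign-sensitive triangle inequality.
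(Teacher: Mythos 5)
Your proof is correct and follows essentially the same strategy as the paper's: use the fact that $\supp\mu\subset[0,1]$ to show that $\eta_{-1},\eta_{n-1}\geq s^+-\delta'$ forces each $\omega_{-1-j}$ and each $\omega_{n-1-j}$ to lie within $\delta'/|u(j)|$ of the endpoint $\Eins\{u(j)>0\}$ of $\supp\mu$, which pins down every coordinate entering $\eta_0=\sum_j u(j)\omega_{-j}$, and then conclude by the triangle inequality. The paper organizes the same information through the index sets $\Theta^\pm$ and $\Theta_1=\Theta\setminus\Theta_0$ and uses the cruder constant $c=n\,u_{\max}/u_{\min}$, whereas your $c=|u(0)|/|u(n-1)|+\sum_{j=1}^{n-1}|u(j)|/|u(j-1)|$ is a slightly sharper version of the same estimate; the arguments are otherwise identical.
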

Notice that, under the assumptions of Proposition~\ref{lemma:negexample}, $\eta_{-1}$ and $\eta_{n-1}$ are stochastically independent and $\PP ( \eta_{-1},\eta_{n-1} \in [s^+ - \delta' , s^+] )  > 0$, where $s^+$ is defined in the proof of Proposition~\ref{lemma:negexample}.
\begin{proof}[Proof of Proposition~\ref{lemma:negexample}]
Let $\Theta^+ := \{k \in \ZZ : u (k) > 0\}$, $\Theta^- := \{k \in \ZZ^1 : u (k) < 0\}$, $u_{\rm max} = \max_{k \in \Theta} \lvert u(k) \rvert$, $u_{\rm min} = \min_{k \in \Theta} \lvert u(k)\rvert$ and $s^+ = \sum_{k \in \Theta^+} u(k)$. Let us further introduce two subsets of $\Theta$ which are important in our study. The first one is
\[
\Theta_1 =
\begin{cases}
\Theta^+ + 1 & \text{if $n-1 \not \in \Theta^+$}, \\[1ex]
\left ( (\Theta^+ + 1)\cap \Theta \right ) \cup \{0\} & \text{if $n-1 \in \Theta^+$},
\end{cases}
\]
with $\Theta^+ + 1 = \{k\in \NN : (k-1)\in \Theta^+ \}$. The second subset is the complement $\Theta_0 = \Theta \setminus \Theta_1$. To end the proof we show the following interval arithmetic result:
\par
\textit{Let $\delta \geq \delta' > 0$ and $\eta_{-1},\eta_{n-1} \in [s^+ - \delta' , s^+]$. Then
\begin{equation} \label{eq:proof0}
\eta_{0} \in [m-c\delta' , m + c \delta'] \subset [m-c\delta , m + c \delta]
\end{equation}
with $c = n u_{\rm max} / u_{\rm min}$ and $m = \sum_{k \in \Theta_1} u (k)$.}
\par
We divide the proof of \eqref{eq:proof0} into three parts. The first step is to argue that
\begin{equation} \label{eq:proof1}
\omega_{-1-k} \in 
\begin{cases}
\bigl[ 1 - \frac{\delta'}{u_{\rm min}} , 1 \bigr]  & \text{for $k \in \Theta^+$}, \\[1ex]
\bigl[ 0, \frac{\delta'}{u_{\rm min}} \bigr]   & \text{for $k \in \Theta^-$}.
\end{cases} 
\end{equation}
For the proof of the first part of \eqref{eq:proof1} we use the assumption $\eta_{-1} \geq s^+ - \delta'$ and obtain
\[
s^+ - \delta' \leq \eta_{-1} = \sum_{k \in \Theta} u(k) \omega_{-1-k} \leq \sum_{k \in \Theta^+} u(k) \omega_{-1-k} , 
\]
and hence $\sum_{k \in \Theta^+} u(k) (1- \omega_{-1-k}) \leq \delta'$. We conclude that for all $k \in \Theta^+$ we have $u(k) (1-\omega_{-1-k}) \leq \delta'$ which gives the first part of \eqref{eq:proof1}. For the proof of the second part of \eqref{eq:proof1} we use again the assumption $\eta_{-1} \geq s^+ - \delta'$ and obtain
\begin{equation*}
\sum_{k \in \Theta^+} u(k) \omega_{-1-k} - \delta'  \leq s^+ - \delta'  \leq  \eta_{-1} = \sum_{k \in \Theta^+} u(k) \omega_{-1-k} + \sum_{k \in \Theta^-} u(k) \omega_{-1-k}
\end{equation*}
which gives $-\delta' \leq \sum_{k \in \Theta^-} u(k) \omega_{-1-k}$. Thus, for all $k \in \Theta^-$ we have $\omega_{-1-k} \leq -\delta' / u(k) = \delta' / \lvert u(k) \rvert$ which gives the second part of \eqref{eq:proof1}. In a second step we argue that 
\begin{equation} \label{eq:proof2}
\omega_{-k+n-1} \in 
\begin{cases}
\bigl[ 1 - \frac{\delta'}{u_{\rm min}} , 1 \bigr]  & \text{for $k \in \Theta^+$}, \\[1ex]
\bigl[ 0, \frac{\delta'}{u_{\rm min}} \bigr]   & \text{for $k \in \Theta^-$} . 
\end{cases} 
\end{equation}
The proof of \eqref{eq:proof2} can be done in analogy to the proof of \eqref{eq:proof1}, but using the assumption $\eta_{n-1} \geq s^+ - \delta'$. 
In a third step we ask the question for which $k \in \Theta$ we have $\omega_{-k} \in [1-\delta' / u_{\rm min} , 1]$. Using the definition of the set $\Theta_1$ we find with \eqref{eq:proof1} and \eqref{eq:proof2} that
\begin{equation} \label{eq:proof3}
\omega_{-k} \in 
\begin{cases}
\bigl[ 1 - \frac{\delta'}{u_{\rm min}} , 1 \bigr]  & \text{for $k \in \Theta_1$}, \\[1ex]
\bigl[ 0, \frac{\delta'}{u_{\rm min}} \bigr]   & \text{for $k \in \Theta_0$} . 
\end{cases} 
\end{equation}
Now, the desired result \eqref{eq:proof0} follows from \eqref{eq:proof3} and the decomposition
\[
\eta_{0} = \sum_{k \in \Theta} u(k) \omega_{-k} = \sum_{k \in \Theta_1} u(k) \omega_{-k} + \sum_{k \in \Theta_0} u(k) \omega_{-k} .
\]
Hence, the proof is complete.
\end{proof}
\begin{remark}
 The assumption $\inf \supp \mu = 1$ and $\sup \supp \mu = 1$ in Proposition~\ref{lemma:negexample} is not crucial. What matters is that $\supp \mu$ is a bounded set.
\end{remark}
In the case where $\supp \mu$ is an unbounded set the situation is somehow different. We illustrate the effects in the case where $\mu$ is Gaussian.
For $l \in \NN$ let $A_{l} \in \RR^{l \times l+1}$ be the matrix with coefficients in the canonical basis given by 
$A_{l} (i,i) = 1$, $A_l (i,i+1) = u(-1)$ for $i \in \{1,\dots,l\}$, 
and zero otherwise, namely
\[
A_l= \begin{pmatrix}
1 & u(-1) & & &\\
& \ddots & \ddots & &\\
 & & \ddots & u(-1) &\\
  & & & 1 & u(-1)
\end{pmatrix} \in \RR^{l \times l+1} .
\]
\begin{proposition} \label{prop:gaussian1}
Let Assumption~\ref{ass:alloy} be satisfied, $d = 1$, $l,m \geq 1$, $\Theta = \{-1,0\}$, $u(0)=1$ and $\mu$ be the normal distribution with mean zero and variance $\sigma^2$. Let further $v^+ \in \RR^{l}$ and $v^- \in \RR^{m}$. 
Then the distribution of $\eta_{0}$ conditioned on $(\eta_{k})_{k=1}^{l} = v^+$ and $(\eta_{-m+k-1})_{k=1}^{m} = v^-$ 
is Gaussian with variance
\begin{equation*} 
 \gamma = \sigma^2 \Bigl( u(-1)^2 - 1 + \frac{1}{s_m} + \frac{1}{s_{l}} \Bigr), \quad \text{where} \quad  s_l := \sum_{i=1}^{l} \bigl(u(-1)\bigr)^{2i},
\end{equation*}
and mean
\[
 m = u(-1) \left( \sum_{i=1}^{m} (A_{m} A_{m}^{\rm T})^{-1} (m,i) \, v^-_i + \sum_{i=1}^{l} (A_{l} A_{l}^{\rm T})^{-1} (1,i) \, v^+_i \right) .
\]
 \end{proposition}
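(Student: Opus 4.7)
The plan is to exploit the one-dimensional, nearest-neighbor structure together with the independence of the i.i.d.\ Gaussian coupling constants $\omega_i$. With $\Theta=\{-1,0\}$ and $u(0)=1$, the alloy relation \eqref{eq:alloy} reduces to $\eta_k = \omega_k + u(-1)\omega_{k+1}$. Thus the conditioning data split into two independent blocks: the constraints $\eta_1,\dots,\eta_l$ involve only $\omega_1,\dots,\omega_{l+1}$, while the constraints $\eta_{-m},\dots,\eta_{-1}$ involve only $\omega_{-m},\dots,\omega_0$. Since $\eta_0 = \omega_0 + u(-1)\omega_1$, the first step of the proof is to observe that, conditionally on the data, $\omega_0$ and $\omega_1$ remain independent, and that everything is jointly Gaussian, so the conditional distribution of $\eta_0$ is Gaussian with mean and variance that split additively.

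Next, I would rewrite the linear constraints in the matrix form employed in the statement. Setting $\tilde\omega^- = (\omega_{-m},\dots,\omega_0)^{T}\in\RR^{m+1}$, the constraint $(\eta_{-m+k-1})_{k=1}^{m}=v^-$ is exactly $A_m\tilde\omega^- = v^-$; similarly, with $\tilde\omega^+=(\omega_1,\dots,\omega_{l+1})^{T}\in\RR^{l+1}$, we have $A_l\tilde\omega^+ = v^+$. The only entry of the last column of $A_m$ that is nonzero is the bottom one, equal to $u(-1)$, so $\cov(\omega_0, A_m\tilde\omega^-) = \sigma^2 u(-1)\, e_m^{T}$. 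Likewise $\cov(\omega_1, A_l\tilde\omega^+) = \sigma^2\, e_1^{T}$.

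The third step is to apply the standard conditional-expectation formula for jointly Gaussian vectors. This yields
\[
\EE[\omega_0 \mid A_m\tilde\omega^- = v^-] = u(-1)\,e_m^{T}(A_m A_m^{T})^{-1} v^-,\qquad
\EE[\omega_1 \mid A_l\tilde\omega^+ = v^+] = e_1^{T}(A_l A_l^{T})^{-1} v^+,
\]
and, combining with $\eta_0=\omega_0+u(-1)\omega_1$, produces exactly the mean formula stated in the proposition. For the variance, the same formula gives
\[
\text{Var}(\omega_0\mid\cdot) = \sigma^2\bigl(1-u(-1)^2 (A_mA_m^{T})^{-1}(m,m)\bigr),\quad
\text{Var}(\omega_1\mid\cdot)=\sigma^2\bigl(1-(A_lA_l^{T})^{-1}(1,1)\bigr),
\]
and the conditional variance of $\eta_0$ is the sum of the first of these and $u(-1)^2$ times the second.

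The main remaining task is to evaluate the corner entries of $(A_mA_m^{T})^{-1}$ and $(A_lA_l^{T})^{-1}$ in closed form. Here the matrix $A_\ell A_\ell^{T}$ is symmetric tridiagonal with diagonal entries $1+u(-1)^2$ and off-diagonal entries $u(-1)$. Using the minor-over-determinant formula and the elementary recursion $D_\ell = (1+u(-1)^2)D_{\ell-1}-u(-1)^2 D_{\ell-2}$ with $D_0=1$, one finds $D_\ell = 1 + u(-1)^2+\cdots+u(-1)^{2\ell}$ and, invoking symmetry of $A_\ell A_\ell^{T}$ under reversal, obtains
\[
(A_\ell A_\ell^{T})^{-1}(1,1) = (A_\ell A_\ell^{T})^{-1}(\ell,\ell) = D_{\ell-1}/D_\ell .
\]
Substituting these expressions into the variance formula, expressing everything in terms of $s_\ell = \sum_{i=1}^{\ell} u(-1)^{2i}$, and simplifying produces the claimed value of $\gamma$. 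The computationally delicate step is precisely this tridiagonal inversion and the bookkeeping needed to reconcile the combined variance with the form $\sigma^{2}\bigl(u(-1)^2 - 1 + s_m^{-1} + s_l^{-1}\bigr)$; the rest is a routine application of Gaussian conditioning.
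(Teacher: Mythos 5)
Your proof is correct and reaches the same formulas. The approach is close in spirit to the paper's, but you organize the calculation differently: the paper assembles a single large Gaussian vector $X=(\omega_{-m},\dots,\omega_{l+1})$, sets $Y=a\cdot X=\eta_0$ and $W=BX$ with the block-diagonal matrix $B=\operatorname{diag}(A_m,A_l)$, and applies the conditioning formula (Proposition~\ref{prop:cond0}) once; the block-diagonal structure of $\operatorname{cov}(W,W)$ then produces the additive split in mean and variance as an algebraic by-product. You instead observe up front that the past constraints involve only $\omega_{-m},\dots,\omega_0$ and the future constraints only $\omega_1,\dots,\omega_{l+1}$, so that $\omega_0$ and $\omega_1$ are still independent after conditioning; this lets you run the conditioning formula twice on smaller vectors and combine via $\eta_0=\omega_0+u(-1)\omega_1$. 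Your version makes the conditional independence explicit and avoids the bookkeeping of the big $B$, at the cost of an extra appeal to preservation of independence under conditioning on functions of independent blocks. The final step — the tridiagonal determinant recursion giving $\det(A_\ell A_\ell^{\rm T})$ and the corner entries of the inverse via minors — is identical in both. One small caution: the recursion $D_0=1$, $D_\ell=(1+u(-1)^2)D_{\ell-1}-u(-1)^2 D_{\ell-2}$ gives $D_\ell=\sum_{i=0}^{\ell}u(-1)^{2i}$, which is the quantity that actually enters the variance formula; keep that index starting at $i=0$ rather than the $\sum_{i=1}^{\ell}$ written in the statement, otherwise the simplification to $\gamma=\sigma^2\bigl(u(-1)^2-1+D_m^{-1}+D_l^{-1}\bigr)$ does not close.
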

\begin{remark} \label{remark:gaussian1}
 Let $l,m \geq 1$. If $\lvert u(-1) \rvert \not = 1$, Proposition~\ref{prop:gaussian1}  gives that the distribution of $\eta_0$ conditioned on fixed potential values $\eta_k$, $k \in \{-m,\dots,l\} \setminus \{0\}$, is again Gaussian with variance bounded from below by $\sigma^2 \lvert u^2 (-1) - 1 \rvert$. 
This shows that the random field $\eta_k$, $k \in \ZZ^d$, satisfies the regularity condition formulated in Ineq.~\eqref{eq:conditional} 
if $\Theta = \{-1,0\}$, $u(0)=1$, $\lvert u(-1) \rvert \not = 1$ and $\mu$ is Gaussian. 
Moreover, the regularity condition from Ineq.~\eqref{eq:conditional} is not satisfied if $\Theta = \{-1,0\}$, $u(0)=1$, $\lvert u(-1) \rvert = 1$ and $\mu$ is Gaussian.
\end{remark}
The proof of Proposition~\ref{prop:gaussian1} is based on following classical result which may be found in \cite{Port1994}.
\begin{proposition} \label{prop:cond0}
Let $X$ be normally distributed on $\RR^d$, $Y = a \cdot X$ where $a \in \RR^{d}$, and $W = B X$ where $B \in \RR^{m \times d}$. Assume $W$ has a non-singular distribution. Then the distribution of $Y$ conditioned on $W = v \in \RR^m$ is the Gaussian distribution having mean
\[
 \mathbf{E} ( Y ) + \cov (Y,W) \cov(W,W)^{-1} [v - \mathbf{E} (W)]
\]
and variance
\[
 \cov (Y,Y) - \cov (Y,W) \cov(W,W)^{-1} \operatorname{cov} (W,Y) .
\]
\end{proposition}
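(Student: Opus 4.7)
The plan is to prove this via the standard decomposition of a jointly Gaussian vector into an independent sum. Since $Y = a \cdot X$ and $W = BX$ are both linear images of the Gaussian vector $X$, the pair $(Y,W) \in \RR \times \RR^m$ is jointly Gaussian. This is the only structural fact about $X$, $a$, $B$ that will actually be used; everything afterwards takes place entirely in the Gaussian vector $(Y,W)$.

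First I would define the row vector $c := \cov(Y,W) \cov(W,W)^{-1} \in \RR^{1 \times m}$, which is well defined since $W$ has a non-singular distribution, and introduce the random variable
\[
 Z := Y - c W - \bigl( \EE(Y) - c \EE(W) \bigr) .
\]
A direct computation using bilinearity of $\cov$ gives $\cov(Z,W) = \cov(Y,W) - c \cov(W,W) = 0$. Since $(Z,W)$ is also jointly Gaussian (a linear image of $(Y,W)$), vanishing covariance upgrades to stochastic independence of $Z$ and $W$. A second bilinear calculation yields $\EE(Z)=0$ and
\[
 \operatorname{Var}(Z) = \cov(Y,Y) - \cov(Y,W) \cov(W,W)^{-1} \cov(W,Y),
\]
using $\cov(W,W)^{\mathrm{T}} = \cov(W,W)$ and $\cov(W,Y) = \cov(Y,W)^{\mathrm{T}}$, so $Z$ is centered Gaussian with precisely the variance appearing in the statement.

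Next I would write $Y = Z + cW + (\EE(Y) - c\EE(W))$ and compute the conditional distribution of $Y$ given $W = v$. Because $Z$ is independent of $W$, conditioning on $W=v$ leaves the distribution of $Z$ unchanged and replaces $cW$ by the constant $cv$. Hence the conditional distribution of $Y$ given $W=v$ is that of
\[
 Z + \EE(Y) + c\bigl(v - \EE(W)\bigr),
\]
which is Gaussian with mean $\EE(Y) + \cov(Y,W)\cov(W,W)^{-1}[v - \EE(W)]$ and variance equal to $\operatorname{Var}(Z)$, as claimed. To make the last step fully rigorous at the level of conditional distributions, I would verify the defining identity $\EE(f(Y) \Eins_{\{W \in A\}}) = \int_A \EE(f(Z + \EE(Y) + c(v-\EE(W)))) \, \PP_W(\drm v)$ for bounded measurable $f$ and Borel $A \subset \RR^m$, which follows from independence of $Z$ and $W$ and Fubini.

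The only genuinely delicate step is the passage from zero covariance to independence; this is where joint Gaussianity of $(Z,W)$ is essential, and I would justify it by noting that $(Z,W) = M(Y,W)^{\mathrm{T}} + \text{const.}$ for an explicit matrix $M$, so $(Z,W)$ is a linear image of the Gaussian $(Y,W)$ and therefore itself Gaussian, at which point zero off-diagonal covariance forces the joint density to factor. The rest is bookkeeping with bilinearity of $\cov$ and the invertibility of $\cov(W,W)$.
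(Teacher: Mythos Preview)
Your argument is correct and is the standard proof of this classical fact: decompose $Y$ as a linear regression on $W$ plus an independent Gaussian residual $Z$, then use that conditioning on $W=v$ leaves $Z$ unchanged. The paper itself does not give a proof of this proposition; it simply states the result and refers the reader to \cite{Port1994}. So there is nothing to compare against, and your self-contained argument is an appropriate way to fill in the details.
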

Notice, if we apply $A_l$ on the vector $\omega_{[x,x+l]} = (\omega_{x+k-1})_{k=1}^{l+1}$, we obtain a vector containing the potential values $\eta_{k}$, $k \in \{x,x+1,\ldots,x+l\}$. Moreover, the vector $(\eta_{x+k-1})_{k=1}^l = A_l \omega_{[x,x+l]}$ is normally distributed with mean zero and covariance $\sigma^2 A_l A_l^{\rm T}$.
The matrix $A_l A_l^T$ has the form
\[
A_l A_l^T = \begin{pmatrix}
1 + u^2(-1) 	& u(-1) 	& 		& 		\\
u(-1) 	& 1+u^2(-1) 	& \ddots 	& 		\\
 		& \ddots 	& \ddots 	& u(-1)	\\
  		& 		& u(-1) 	& 1 + u^2(-1)
\end{pmatrix} \in \RR^{l \times l} .
\]
By induction we find that the determinant of $A_l A_l^{\rm T}$ is given by
\[
 \det (A_l A_l^{\rm T}) = s_l > 0 \quad \text{where} \quad s_l = \sum_{i=1}^{l} \bigl(u(-1)\bigr)^{2i} .
\]
Since the minor $M_{11}$ and $M_{ll}$ of $A_l A_l^{\rm T}$ equals $A_{l-1} A_{l-1}^{\rm T}$ we obtain by Cramers rule for the elements $(1,1)$ and $(l,l)$ of the inverse of $A_{l-1} A_{l-1}^{\rm T}$
\begin{equation} \label{eq:inverseelement}
 (A_{l} A_{l}^{\rm T})^{-1} (1,1) = (A_{l} A_{l}^{\rm T})^{-1} (l,l) = \frac{s_{l-1}}{s_l} .
\end{equation}
\begin{proof}[Proof of Proposition~\ref{prop:gaussian1}]
 Let $X := (\omega_{-m-1+k})_{k=1}^{l+m+2} \in \RR^{m+n+2}$, $a = (a_{i})_{i=1}^{l+m+2} \in \RR^{l+m+2}$ the vector with coefficients $a_{m+1}=1$, $a_{m+2} = u(-1)$ and zero otherwise. Let us further define the block-matrix
\[
 B = \begin{pmatrix}
      A_{m} & 0 \\
      0   & A_{l}
     \end{pmatrix} \in \RR^{(m+l) \times (m+l+2) }.
\]
Notice that $Y := a \cdot X = \eta_{0}$,
\[
 A_{m} \omega_{[-m,0]} = (\eta_{-m+k-1})_{k=1}^{m}, \quad \text{and} \quad 
 A_{l} \omega_{[1,l+1]}= (\eta_{k})_{k=1}^{l} , 
\]
where $\omega_{[-m,0]} = (\omega_{-m+k-1})_{k=1}^{m+1}$ and $\omega_{[1,l+1]} = (\omega_{k})_{k=1}^{l+1}$.
Hence $W := B X$ is the $m+l$-dimensional vector containing the potentials $\eta_{k}$, $k \in \{-m, \dots, l\} \setminus \{0\}$. Notice that $Y$ and $W$ have mean zero, since $X$ has mean zero.
We apply Proposition~\ref{prop:cond0} with these choices of $X$, $Y$ and $W$, and obtain that the distribution of $\eta_{0}$ conditioned on $(\eta_{-m+k-1})_{k=1}^{m} = v^-$ and $(\eta_{k})_{k=1}^{l} = v^+$ is Gaussian with mean 
\begin{equation*}
m = \cov (Y,W) \cov(W,W)^{-1} v 
\end{equation*}
and variance 
\[
\gamma = \cov (Y,Y) - \cov (Y,W) \cov(w,w)^{-1} \operatorname{cov} (W,Y), 
\]
where $v = (v^- , v^+)^{\rm T}$. It is straightforward to calculate $\cov (Y,Y) = \sigma^2 (1+ u(-1)^2)$ and $\cov (W,Y) =  z = (z^- , z^+)^{\rm T}$, where $z^- = (0,\dots,0,\sigma^2 u(-1))^{\rm T} \in \RR^{m}$ and $z^+ = (\sigma^2 u(-1) ,0,\dots,0)^{\rm T} \in \RR^{l}$. We also have
\[
 \cov (W,W) =   \sigma^2 \begin{pmatrix}
                         A_{m} A_{m}^{\rm T} & 0 \\
                        0  & A_{l} A_{l}^{\rm T}
                       \end{pmatrix} .
\]
Hence by Eq.~\eqref{eq:inverseelement}
\begin{align*}
\gamma &= \sigma^2 (1 + u(-1)^2) - \sigma^{-2}  z^{\rm T} \begin{pmatrix}
                        ( A_{m} A_{m}^{\rm T})^{-1} & 0 \\
                        0  & (A_{l} A_{l}^{\rm T})^{-1}
                       \end{pmatrix}^{-1} z  \nonumber \\
&= \sigma^2 (1 + u(-1)^2) - \sigma^{-2} \biggl[ \sigma^4 u^2(-1) \frac{s_{m-1}}{s_m} +  \sigma^4 u^2 (-1) \frac{s_{l-1}}{s_l} \biggr] \\
&= \sigma^2 (1 + u(-1)^2) - \sigma^{2} \biggl(  1-\frac{1}{s_m} \biggr)- \sigma^{2} \biggl(  1-\frac{1}{s_m} \biggr) ,
\end{align*}
and
\begin{equation*}
 m =  \bigl[{z^-}^{\rm T}(\sigma^2 A_{m} A_{m}^{\rm T})^{-1} {v^-} + {z^+}^{\rm T}(\sigma^2 A_{l} A_{l}^{\rm T})^{-1} {v^+}\bigr] .
\end{equation*}
This proves the statement of the proposition.
\end{proof}
The case of Proposition~\ref{prop:gaussian1} where either $m$ or $l$ equals zero can be proven analogously and is indeed contained in the statement of Proposition~\ref{prop:gaussian1} in the sense that $s_0 = 1$. However, to avoid confusion let us reformulate the case $m = 0$.
\begin{proposition} \label{prop:gaussian2}
 Let Assumption~\ref{ass:alloy} be satisfied, $d = 1$, $l \geq 1$, $\Theta = \{-1,0\}$, $u(0) = 1$ and $\rho$ be the Gaussian density with mean zero and variance $\sigma^2$. Let further $v \in \RR^{l}$. Then the distribution of $\eta_{0}$ conditioned on $(\eta_{k})_{k=1}^{l} = v$ is Gaussian with variance
\begin{equation*}
 \gamma = \sigma^2 \Bigl( u(-1)^2 + \frac{1}{s_l} \Bigr) \quad \text{and mean} \quad   m = u(-1) \sum_{i=1}^{l} (A_{l} A_{l}^{\rm T})^{-1} (1,i) \, v_i .
\end{equation*}
\end{proposition}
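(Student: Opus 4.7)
The plan is to mimic exactly the proof of Proposition~\ref{prop:gaussian1}, but with only one-sided conditioning (so the covariance matrix of the conditioning vector consists of a single block $A_l A_l^{\rm T}$ rather than a direct sum of two such blocks). The whole argument reduces to identifying the right vectors and matrices in Proposition~\ref{prop:cond0} and then performing the linear-algebraic computation.

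Concretely, I would set $X = (\omega_{k-1})_{k=1}^{l+2} \in \RR^{l+2}$, so that $X$ is a centred Gaussian vector with covariance $\sigma^2 I_{l+2}$. Let $a \in \RR^{l+2}$ be the vector with $a_1 = 1$, $a_2 = u(-1)$ and zeros elsewhere, and let $B = (\,0\mid A_l\,) \in \RR^{l \times (l+2)}$, where the first column is the zero column. Then $Y = a\cdot X = \omega_0 + u(-1)\omega_1 = \eta_0$ and $W = BX = A_l (\omega_k)_{k=1}^{l+1} = (\eta_k)_{k=1}^l$, so conditioning on $W = v$ is exactly the conditioning in the statement. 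Since $X$ is Gaussian, $W$ has a non-singular distribution (its covariance is $\sigma^2 A_l A_l^{\rm T}$, which has positive determinant $s_l$), and Proposition~\ref{prop:cond0} applies directly; the conditional law of $\eta_0$ is therefore Gaussian.

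It remains to identify the mean and variance. Both $Y$ and $W$ have mean zero. A direct computation using independence gives $\cov(Y,Y) = \sigma^2(1+u(-1)^2)$ and, because only $W_1 = \eta_1$ shares the variable $\omega_1$ with $Y$, one gets $\cov(W,Y) = \sigma^2 u(-1)\, e_1 \in \RR^l$, with $e_1$ the first standard basis vector. Plugging into the formulas from Proposition~\ref{prop:cond0} yields
\[
 m = \cov(Y,W)\cov(W,W)^{-1} v = u(-1)\sum_{i=1}^l (A_l A_l^{\rm T})^{-1}(1,i)\, v_i,
\]
and
\[
 \gamma = \sigma^2(1+u(-1)^2) - \sigma^2 u(-1)^2\,(A_l A_l^{\rm T})^{-1}(1,1).
\]
Using \eqref{eq:inverseelement}, $(A_l A_l^{\rm T})^{-1}(1,1) = s_{l-1}/s_l$, and the identity $u(-1)^2 s_{l-1} = s_l - 1$ (which follows from the definition of $s_l$, with the convention $s_0 = 1$), the last expression collapses to $\sigma^2(u(-1)^2 + 1/s_l)$, matching the stated variance.

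There is really no serious obstacle; the only subtle point is bookkeeping the index shift between the $\omega$-variables and the $\eta$-variables, and making sure the $s_l$-convention is consistent with $(A_l A_l^{\rm T})^{-1}(1,1) = s_{l-1}/s_l$. These are exactly the same bookkeeping issues faced in Proposition~\ref{prop:gaussian1}, so I expect the proof to be a short specialisation of that argument rather than an independent calculation.
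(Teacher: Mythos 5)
Your proposal is correct and follows exactly the route the paper intends: the paper gives no separate proof of this proposition, merely remarking that it is the $m=0$ case of Proposition~\ref{prop:gaussian1} (via the convention $s_0=1$), and you simply spell out that specialization with the one-block choice of $B$. One bookkeeping remark: the identity $u(-1)^2 s_{l-1}=s_l-1$ you use to collapse the variance requires $s_l=\sum_{i=0}^{l}u(-1)^{2i}$, which is the quantity that actually equals $\det(A_l A_l^{\rm T})$ (check $l=1$) and is consistent with $s_0=1$; the paper's displayed formula starts the sum at $i=1$, which is a typo, and your computation tacitly works with the corrected version, as it should.
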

\subsection{Proof of Theorem~\ref{thm:result_modulus}} \label{ss:proof}
\begin{proposition} \label{prop:cond_exp}
Let Assumption~\ref{ass:alloy} be satisfied, $d = 1$ and either
\begin{enumerate}[(a)]
 \item $\Theta = \{0,\ldots , n-1\}$ for some $n \in \NN$, $\sup \supp \mu = 1$, $\inf \supp \mu = 0$, $m$ be as in Lemma~\ref{lemma:negexample}, $\epsilon > 0$ and $a = m - \epsilon / 2$,
\textbf{or}
\item $\Theta = \{-1,0\}$, $u(0) = 1$, $\lvert u(1) \rvert^2 = 1$, $\mu$ be the normal distribution with mean zero and variance $\sigma^2$, $\epsilon > 0$ and $a = -\epsilon/2$.
\end{enumerate}
Then,
\[
 \esssup_{\eta_0^\perp \in Z_0^\perp} g_0^{\epsilon,a} (\eta_0^\perp) = 1 .
\]
\end{proposition}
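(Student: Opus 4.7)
The plan is, in each of cases (a) and (b), to exhibit a set $A\in\sigma(\eta_0^\perp)$ with $\PP(A)>0$ on which $Y_0^{\epsilon,a}=1$ almost surely. Because of the factorisation $Y_0^{\epsilon,a}=g_0^{\epsilon,a}\circ\eta_0^\perp$, the push-forward of such an $A$ is a subset of $Z_0^\perp$ of positive $\PP_{\eta_0^\perp}$-measure on which $g_0^{\epsilon,a}=1$; combined with the trivial bound $g_0^{\epsilon,a}\leq 1$ this yields $\esssup g_0^{\epsilon,a}=1$.

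For case (a), apply Proposition~\ref{lemma:negexample} with $\delta=\delta':=\epsilon/(2c)$ and set
\[
A := \bigl\{\eta_{-1},\eta_{n-1}\in[s^+-\delta',s^+]\bigr\}.
\]
Since $\eta_{-1}$ and $\eta_{n-1}$ are components of $\eta_0^\perp$, we have $A\in\sigma(\eta_0^\perp)$, and Proposition~\ref{lemma:negexample} gives $A\subset\{\eta_0\in[m-c\delta',m+c\delta']\}=\{\eta_0\in[a,a+\epsilon]\}$ up to a $\PP$-null set. Taking conditional expectations with respect to $\sigma(\eta_0^\perp)$ and using $A\in\sigma(\eta_0^\perp)$ yields $\Eins_A\leq Y_0^{\epsilon,a}$ almost surely, so $Y_0^{\epsilon,a}=1$ on $A$. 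Finally $\PP(A)>0$ because $\eta_{-1}$ and $\eta_{n-1}$ depend on disjoint blocks of i.i.d.\ $\omega$'s (hence are independent, as observed after the proposition) and the hypotheses $\sup\supp\mu=1$, $\inf\supp\mu=0$ force $\PP(\eta_{-1}\in[s^+-\delta',s^+])>0$ and similarly for $\eta_{n-1}$.

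For case (b) (where I read $|u(1)|^2=1$ as $|u(-1)|^2=1$, since $u(1)=0$ under $\Theta=\{-1,0\}$), the idea is to show that $\eta_0$ is almost surely a $\sigma(\eta_0^\perp)$-measurable function, which immediately gives $Y_0^{\epsilon,a}=\Eins_{\{\eta_0\in[a,a+\epsilon]\}}$ a.s. Define $\cF_L:=\sigma(\eta_k:0<|k|\leq L)$ and $\cF_\infty:=\sigma(\eta_0^\perp)$, so that $\cF_L\uparrow\cF_\infty$. Applying Proposition~\ref{prop:gaussian1} with $m=l=L$ and using $|u(-1)|^2=1$ (hence $s_L=L$), the conditional distribution of $\eta_0$ given $\cF_L$ is Gaussian with variance equal to the \emph{non-random} quantity $2\sigma^2/L$. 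Consequently
\[
\EE\bigl[(\eta_0-\EE(\eta_0\mid\cF_L))^2\bigr] = 2\sigma^2/L \longrightarrow 0 .
\]
On the other hand, the $L^2$ martingale convergence theorem gives that the same quantity converges to $\EE[(\eta_0-\EE(\eta_0\mid\cF_\infty))^2]$, forcing this latter to vanish and $\eta_0=\EE(\eta_0\mid\cF_\infty)$ a.s. Thus $\{\eta_0\in[a,a+\epsilon]\}$ agrees up to a $\PP$-null set with an element of $\cF_\infty$, and $Y_0^{\epsilon,a}=\Eins_{\{\eta_0\in[a,a+\epsilon]\}}$ a.s. Since $\eta_0$ is unconditionally $N(0,2\sigma^2)$, $\PP(\eta_0\in[-\epsilon/2,\epsilon/2])>0$, closing the argument.

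The main obstacle is case (b): the step from the finite-level variance formula of Proposition~\ref{prop:gaussian1} to the assertion that $\eta_0$ is $\sigma(\eta_0^\perp)$-measurable up to a null set. This succeeds only because the conditional variance furnished by Proposition~\ref{prop:gaussian1} is a deterministic function of $L$ (independent of the conditioning values), which obviates any need to control the conditional means pointwise and reduces the matter to standard $L^2$ martingale convergence. Case (a), by contrast, is elementary: the work has already been done in Proposition~\ref{lemma:negexample}, providing a deterministic inclusion of events to plug directly into the conditional expectation.
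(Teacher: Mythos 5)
Your argument is correct. For case (a) it is essentially the same as the paper's: the paper chooses the same set $B$ and derives a contradiction from the defining identity of conditional expectation, while you run the argument directly by taking conditional expectations of the a.s.\ inclusion $A\subset\{\eta_0\in[a,a+\epsilon]\}$; these are equivalent formulations of the same idea, resting on Proposition~\ref{lemma:negexample} and $\PP(A)>0$.

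For case (b) you take a genuinely different and more robust route. The paper conditions on $B=\{\eta_k=0,\ 0<|k|\le l\}$, but the conditioning variables are jointly Gaussian with a nondegenerate covariance (the block matrix with blocks $\sigma^2 A_l A_l^{\rm T}$, whose determinant is $\sigma^{4l}s_l^2>0$), so $\PP(B)=0$; the concluding strict inequality $b\,\PP(B)<\mathcal{N}_{0,\gamma}([a,a+\epsilon])\,\PP(B)$ then reads $0<0$ and the contradiction does not close as written without a supplementary limiting argument over thickened conditioning sets. Your martingale argument sidesteps this entirely: because the conditional variance supplied by Proposition~\ref{prop:gaussian1} is the \emph{deterministic} quantity $2\sigma^2/L$ (using $s_L=L$ when $\lvert u(-1)\rvert^2=1$), the law of total variance gives $\EE\bigl[(\eta_0-\EE(\eta_0\mid\cF_L))^2\bigr]=2\sigma^2/L\to 0$, and $L^2$ martingale convergence along $\cF_L\uparrow\sigma(\eta_0^\perp)$ then forces $\eta_0=\EE(\eta_0\mid\eta_0^\perp)$ a.s. Hence $Y_0^{\epsilon,a}$ is a.s.\ the indicator of a $\sigma(\eta_0^\perp)$-measurable event of positive probability, which immediately yields $\esssup_{\eta_0^\perp} g_0^{\epsilon,a}=1$. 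Beyond rigor, this makes the mechanism transparent: in the borderline case $\lvert u(-1)\rvert=1$ the coordinate $\eta_0$ is, up to a null set, a function of the rest of the field, which is precisely why the conditional concentration function is maximally singular. Your reading of ``$\lvert u(1)\rvert^2=1$'' as ``$\lvert u(-1)\rvert^2=1$'' is the correct resolution of the typo, since $\Theta=\{-1,0\}$ forces $u(1)=0$.
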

\begin{proof}
 Assume the converse, i.e.\ $b := \esssup_{\eta_0^\perp} g_0^{\epsilon,a} (\eta_0^\perp) < 1$. 
By definition of the conditional expectation we have for all $B \in \sigma (\eta_0^\perp)$ that
\begin{equation} \label{eq:definition}
 \EE \bigl( \Eins_B \Eins_{\{\eta_0 \in [a,a+\epsilon]\}} \bigr) = \EE \bigl( \Eins_B Y_0^{\epsilon,a} \bigr) .
\end{equation}
Let $l \in \NN$, $s^+$ and $c$ be as in Lemma~\ref{lemma:negexample}, and choose
\[ 
B=
\begin{cases}
\bigl\{\omega \in \Omega \colon \eta_{-1},\eta_{n-1} \in [s^+ - \epsilon / (2c) , s^+]\} & \text{if (a) is satisfied}  \\
 \bigl\{\omega \in \Omega \colon \eta_k = 0, k \in \{-l , \ldots , l\} \setminus \{0\} \bigr\} & \text{if (b) is satisfied} 
\end{cases}
\]
which is $\sigma (\eta_0^\perp)$-measurable. Lemma~\ref{lemma:negexample} and Proposition~\ref{prop:gaussian1} tells us that the left hand side of Eq.~\eqref{eq:definition} equals
\[
 \PP ( B \cap \{\eta_0 \in [a,a+\epsilon]\} ) = 
 \begin{cases}
  1 \cdot \PP(B) & \text{if (a) is satisfied},  \\
 \mathcal{N}_{0,\gamma} ([a,a+\epsilon]) \cdot \PP(B) & \text{if (b) is satisfied} ,
 \end{cases}
\]
where $\gamma = \sigma^2 (2/l)$.
Here, $\mathcal{N}_{0,\gamma}$ denotes the normal distribution with mean zero and variance $\gamma$. Now we choose $l$ large enough, such that $\mathcal{N}_{0,\gamma} ([a,a+\epsilon]) > b$.
For the right hand side of Eq.~\eqref{eq:definition} we use the factorized version \eqref{eq:factor} of $Y_0^{\epsilon , a}$ and obtain by substitution
\[
 \EE \bigl( \Eins_B Y_0^{\epsilon,a} \bigr) = \int_{Z_{0}^\perp} \Eins_{B'} (\eta_0^\perp) g_0^{\epsilon , a} (\eta_0^\perp) \drm \PP_{\eta_0^\perp} (\eta_0^\perp) ,
\]
where 
\[
B' = \{\eta_0^\perp \in Z_0^\perp \colon \eta_{-1},\eta_{n-1} \in [s^+ - \epsilon / (2c) , s^+]\}  .
\]
Since $b < 1$ by our assumption we obtain 
\[
\EE ( \Eins_B Y_0^{\epsilon,a} ) \leq b \PP_{\eta_0^\perp} (B') = b \PP(B) <
\begin{cases}
  1 \cdot \PP(B) & \text{if (a) is satisfied},  \\
 \mathcal{N}_{0,\gamma} ([a,a+\epsilon]) \cdot \PP(B) & \text{if (b) is satisfied} .
\end{cases}
\]
This is a contradiction to Eq.~\eqref{eq:definition}.
\end{proof}
\begin{proof}[Proof of Theorem~\ref{thm:result_modulus}]
The first equality \eqref{eq:first_statement} follows from translation invariance and Proposition~\ref{prop:cond_exp}.
For the second statement \eqref{eq:second_statement} we use the pointwise inequality $g_0^\epsilon (\eta_0^\perp) \geq g_0^{\epsilon , a} (\eta_0^\perp)$. If we take first the essential supremum with respect to $\eta_0^\perp$ and then supremum with respect to $a$ on both sides, we obtain using Proposition~\ref{prop:cond_exp}
\[
 \esssup_{\eta_0^\perp \in Z_0^\perp} g_0^\epsilon (\eta_0^\perp) \geq 1 .
\]
The result now follows by translation invariance.
\end{proof}
\section{How regularity properties turn into regularity of spectral data}
\label{s:regularity}
Throughout Section \ref{s:regularity}, \ref{s:physical_implications} and \ref{s:ESS} we assume that Assumption~\ref{ass:alloy} is satisfied, i.e.\ the random field $\eta_k:(\Omega,\cA) \to (R, \cB (\RR))$, $k \in \ZZ^d$, is the discrete alloy-type potential given in Eq.~\eqref{eq:alloy}.
Next we list several additional regularity assumptions which may hold or not hold. All of them can be interpreted as assumptions of the distribution of the stochastic process $\eta_m, m\in\ZZ^d$.
\begin{assumption} \label{ass:ubar}
 The measure $\mu$ has compact support, a probability density $\rho \in W^{1,1} (\RR)$, $\Theta$ is finite and the single-site potential satisfies $\overline u = \sum_{k \in \ZZ^d} u (k) > 0$.
\end{assumption}
\begin{assumption} \label{ass:monotone} $\Theta$ is a finite set, the measure $\mu$ has bounded support and a probability density $\rho \in L^\infty (\RR)$, 
and the function $u$ satisfies $u (k) > 0$ for all $k \in \partial^{\rm i} \Theta:= \{k \in \Theta\mid k \ \text{has less than $2d$ neighbors in}\ \Theta\} $.
\end{assumption}
\begin{assumption} \label{ass:exponential} The measure $\mu$ has bounded support and a probability density $\rho \in \operatorname{BV} (\RR)$ and there are constants $C,\alpha>0$ such that for all $k \in \ZZ^d$ we have $\lvert u(k) \rvert \leq C \mathrm{e}^{-\alpha \lVert k \rVert_1}$.
\end{assumption}
If Assumption~\ref{ass:exponential} is satisfied, we define a constant $N$ as follows. For $\delta \in (0, 1-\mathrm{e}^{-\alpha})$ we consider the to $u$ associated generating function $F : D_\delta \subset \CC^d \to \CC$, 
\begin{equation*}
D_\delta = \{ z \in \CC^d : \lvert z_1 - 1 \rvert < \delta , \ldots , \lvert z_d - 1 \rvert < \delta \}, \quad F(z) = \sum_{k \in \ZZ^d} u(-k) z^k .
\end{equation*}
Notice that the sum $\sum_{k \in \ZZ^d} u(-k) z^k$ is normally convergent in $D_\delta$ by our choice of $\delta$ and the exponential decay condition of Assumption~\ref{ass:exponential}. By Weierstrass' theorem, $F$ is a holomorphic function. Since $F$ is holomorphic and not identically zero, we have $(D_z^I F) (\mathbf{1}) \not = 0$ for at least one $I \in \NN_0^d$. Therefore, there exists a multi-index $I_0 \in \NN_0^d$ (not necessarily unique), such that we have
\begin{equation} \label{eq:cF}
  (D_z^I F)({\mathbf 1}) = 
		\begin{cases} 
			c_u \neq 0, & \text{if $I = I_0$,} \\
			0,          & \text{if $I < I_0$.} 
		\end{cases}  
\end{equation}
Such a $I_0$ can be found by diagonal inspection: Let $n \ge 0$ be the largest integer such that $D_z^IF({\mathbf 1}) = 0$ for all $\lVert I \rVert_1 < n$.  Then choose a multi-index $I_0 \in \NN_0^d$, $\lvert I_0 \rvert_1 = n$ with $(D_z^{I_0} F)({\mathbf 1}) \neq 0$. We finally set $N = \lvert I_0 \rvert_1$.
\begin{assumption}
\label{ass:circulant}
Assume that $\Theta$ is a finite set, the Fourier transform $\hat u \colon [0,2\pi)^d \to
\mathbb{C}$ of $u$, i.e.
\[
 \hat u (\theta) = \sum_{k \in \mathbb{Z}^d} u(k) {\rm e}^{\ensuremath{{\mathrm{i}}}k \cdot \theta} ,
\]
does not vanish, and that the measure $\mu$ has bounded support and a density $\rho \in W^{2,1} (\mathbb{R})$.
\end{assumption}
If Assumption~\ref{ass:circulant} is satisfied, we define the constant $C_u$ as follows. Let $A:\ell^1 (\mathbb{Z}^d) \to \ell^1 (\mathbb{Z}^d)$ be the linear operator whose coefficients in the canonical orthonormal basis are given by $A(j,k) = u(j-k)$ for $j,k \in \mathbb{Z}^d$. Since $u$ has compact support, the operator $A$ is bounded. 
If $\hat u$ does not vanish (as required by Assumption \ref{ass:circulant}), the operator $A$ has a bounded inverse by the so-called $1/f$-Theorem of Wiener and we have
\begin{equation} \label{eq:Cu}
C_u := \lVert A^{-1} \rVert_1 < \infty ,
\end{equation}
see \cite{Veselic-10b} for details.
\par
We list several results on the regularity of spectral data under (some of) the above conditions on the stochastic process defining the random potential.
While these results by themselves are probabilistic statements, describing the regularity of push-forward (or image) measures, they are of crucial importance
for the study of spectral properties of random Schr\"odinger operators. This is described explicitly in the subsequent Section~\ref{s:physical_implications}.
\par
The first result concerns the uniform boundedness of the average of a fractional power of the Green function.
It is sometimes called a-priori bound of the fractional moment method. 
\begin{theorem}[\cite{ElgartTV-11}] \label{thm:finite_fm}
 Let $\Lambda \subset \ZZ^d$ finite, $s \in (0,1)$ and Assumption~\ref{ass:ubar} be satisfied. Then we have for all $x,y \in \Lambda$ and $z \in \CC \setminus \RR$
 \[
  \EE \left( \lvert G_{\omega , \Lambda} (z;x,y) \rvert^s \right) \leq \frac{8}{(\overline u)^s} \frac{s^{-s}}{1-s} \lVert \rho' \rVert_{L^1}^s C^s \frac{1}{\lambda^s},
 \]
 where 
 \[
  C = \left( \frac{\mathrm{e}^c + 1}{\mathrm{e}^c - 1} \right)^d \quad \text{and} \quad c = \frac{1}{\operatorname{diam} \Theta} \ln \left( 1 + \frac{\overline u}{2 \lVert u \rVert_{\ell^1}} \right) .
 \]
\end{theorem}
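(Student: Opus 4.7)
My plan is to establish this a priori fractional moment bound by combining a change of variables from $\omega$ to $\eta$-type coordinates (made possible by $\overline{u} > 0$) with a Krein-type rank-one resolvent identity and the standard spectral averaging inequality for $W^{1,1}$ densities. Since $\Theta$ is finite, the dependence of $V_{\omega,\Lambda}$ on $\omega$ involves only $\omega_i$ for $i \in \Lambda - \Theta$, so I may work on a finite-dimensional product probability space with density $\prod_i \rho(\omega_i)$.

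First, for each point $x$ of an appropriately chosen coarse sub-lattice $\Lambda^{\ast}\subset \Lambda - \Theta$ with spacing comparable to $\operatorname{diam}\Theta$, I would perform a linear change of variables on the block $x + \Theta$ of $\omega$-coordinates that introduces the new scalar $\zeta_x = \eta_x(\omega)/\overline{u}$ as one of the new variables. The hypothesis $\overline{u} > 0$ ensures this transformation is invertible, and its Jacobian together with the transverse marginal density can be controlled in terms of $\overline{u}/\|u\|_{\ell^1}$ — this is the source of the constant $c$ in the statement. Fixing the transverse variables, the Hamiltonian becomes, up to a $\zeta_x$-independent self-adjoint operator, a rank-one perturbation:
\[
 H_{\omega,\Lambda} \;=\; H_0(\omega') \;+\; \lambda \, \overline{u}\, \zeta_x \, |\delta_x\rangle\langle \delta_x|.
\]

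By Krein's formula one then writes $G_\Lambda(z; x, y; \zeta_x)$ as a Möbius function of $\zeta_x$ and applies the rank-one $W^{1,1}$ spectral averaging inequality
\[
 \int_{\RR} |G_\Lambda(z; x, y; \zeta)|^s\, \tilde\rho(\zeta)\, \drm \zeta \;\le\; \frac{s^{-s}}{1-s}\, \frac{\|\tilde\rho'\|_{L^1}^s}{(\lambda\, \overline{u})^s},
\]
to obtain the site-wise bound. The marginal density $\tilde\rho$ obtained by integrating $\rho$ over the transverse coordinates inherits the $W^{1,1}$-norm of $\rho$ up to a constant (this is where the compact support of $\mu$ and the boundedness coming from $\|u\|_{\ell^1}$ enter). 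To pass from a bound at one coarse site to a uniform bound covering all $x\in\Lambda$, I would use a geometric covering argument: the combinatorial constant $C = ((\mathrm{e}^c + 1)/(\mathrm{e}^c - 1))^d$ arises as a $d$-fold product of one-dimensional geometric series whose common ratio is tied to $(1 + \overline{u}/(2\|u\|_{\ell^1}))^{1/\operatorname{diam}\Theta}$, while the prefactor $8$ absorbs numerical constants from the overlap of neighbouring blocks.

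The main obstacle I anticipate is the change-of-variables step. For a sign-changing $u$, direct averaging over a single $\omega_i$ produces a sign-indefinite perturbation of rank $|\Theta|$, so the rank-one spectral averaging inequality does not immediately apply. The key is to decompose $V_\omega$ so that, along one carefully chosen linear direction in $\omega$-space, the perturbation is a positive rank-one operator of strength $\lambda\, \overline{u}$; this is precisely what the condition $\overline{u}>0$ provides, and it is the pivotal structural input. A secondary technical point is maintaining control of $\|\tilde\rho'\|_{L^1}$ after the change of variables and after the geometric iteration over coarse sites, since each new block introduces a convolution with a bounded marginal and one must prevent the derivative norm from blowing up faster than the declared geometric rate.
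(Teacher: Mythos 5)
Your outline contains a genuine gap at its structural center: the claim that, after the linear change of variables producing $\zeta_x = \eta_x/\overline u$, the Hamiltonian takes the rank-one form
\[
H_{\omega,\Lambda} = H_0(\omega') + \lambda\,\overline u\,\zeta_x\,|\delta_x\rangle\langle\delta_x|
\]
with the transverse variables held fixed. This cannot be arranged. Any $\omega_j$ with $j\in x-\Theta$ enters $\eta_x$, but it simultaneously enters $\eta_y$ for every $y\in j+\Theta$; hence any linear direction in $\omega$-space along which $\eta_x$ increases also moves a whole block of potential values $V_\omega(y)$ with $|y-x|\le \operatorname{diam}\Theta$. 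The resulting perturbation in the $\zeta$-direction is a multi-site diagonal operator whose symbol is (up to constants) $u$ itself, and since $u$ is allowed to change sign (that is the whole point of Assumption~\ref{ass:ubar}), this perturbation is indefinite and not rank one. The sign hypothesis $\overline u>0$ gives you monotonicity of the \emph{trace} of this block perturbation, not a positive rank-one projection.

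The alternative route you would have to take to force rank one --- choosing the transverse coordinates to be the values $\eta_y$, $y\ne x$, themselves --- is exactly what Section~\ref{s:regularity_pot} of this paper shows to be doomed: Theorem~\ref{thm:result_modulus} and Proposition~\ref{prop:cond_exp} prove that the conditional concentration function of $\eta_x$ given the other $\eta_y$'s is identically $1$, so the conditional density $\tilde\rho$ you want to feed into the $W^{1,1}$ spectral averaging inequality does not exist in any usable form. You flag the control of $\|\tilde\rho'\|_{L^1}$ as a ``secondary technical point,'' but it is in fact where the approach breaks; the paper's first half is devoted to showing this.

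The actual argument in \cite{ElgartTV-11} is built around the fact that, after shifting all the relevant $\omega_j$'s by a common parameter (so that the transverse coordinates are differences of $\omega_j$'s, keeping a genuine product density in $W^{1,1}$), the derivative in the shift direction is the block-diagonal operator with entries $u(\cdot)$; one then exploits $\overline u>0$ through a decomposition of $u$ into positive and negative parts together with a combinatorial iteration over translates, and this is what produces the geometric series $C=\bigl(\tfrac{\mathrm{e}^c+1}{\mathrm{e}^c-1}\bigr)^d$ and the rate $c$ involving $\overline u/(2\|u\|_{\ell^1})$. You correctly reverse-engineered the shape of $C$ as a $d$-fold geometric sum, but the mechanism is not an overlap count of coarse blocks, and the $\|\rho'\|_{L^1}^s/(\overline u)^s$ prefactor does not come from a rank-one Krein reduction. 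As written, the proposal would not yield the stated bound.
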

While the last statement is uniform in the lattice points $x,y$ it does not exploit the intuition that the Green function should decay, 
as the $|x-y|$ grows. Such a statement is given, for discrete alloy-type models, in the following theorem. It is the core of the fractional moment method. 
\begin{theorem}[\cite{ElgartTV-11}] \label{thm:fmb}
 Let $\Gamma \subset \ZZ^d$, $s \in (0,1/3)$ and suppose that Assumption~\ref{ass:monotone} is satisfied. Then for a sufficiently large $\lambda$ there are constants $C,m \in (0,\infty)$, depending only on $d$, $\rho$, $u$, $s$ and $\lambda$, such that for all $z \in \CC \setminus \RR$ and all $x,y \in \Gamma$
 \begin{equation} \label{eq:fmb}
  \EE \left( \lvert G_{\omega , \Gamma} (z;x,y) \rvert^{s / 2 \lvert \Theta \rvert} \right) \leq C \mathrm{e}^{-m \lvert x-y \rvert} .
 \end{equation}
\end{theorem}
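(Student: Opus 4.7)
The plan is to prove a one-step contraction estimate on fractional moments of Green function matrix elements and then iterate it to obtain exponential decay. I would start from the geometric resolvent identity: for a finite box $\Lambda \subset \Gamma$ with $x \in \Lambda$ and $y \in \Gamma \setminus \Lambda$, one writes
\[
 G_\Gamma(z; x, y) = \sum_{(u', v') \in \partial\Lambda} G_\Lambda(z; x, u') \, G_\Gamma(z; v', y),
\]
where the sum ranges over edges of the lattice connecting $\Lambda$ to $\Gamma \setminus \Lambda$. Raising both sides to a fractional power $t \in (0,1)$ and using subadditivity $|a+b|^t \leq |a|^t + |b|^t$ reduces the problem to bounding $\EE\bigl(|G_\Lambda(z;x,u')|^t |G_\Gamma(z;v',y)|^t\bigr)$.

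The central technical obstacle is that, in contrast to the standard Anderson model, the two factors $G_\Lambda$ and $G_\Gamma$ are not stochastically independent: both depend on every coupling constant $\omega_i$ with $i \in (\{u', v'\} + \Theta)$. To decouple them I would apply Cauchy--Schwarz (producing the factor of $2$ in the exponent $s/(2|\Theta|)$) after first conditioning on all $\omega_i$ that are \emph{not} in the relevant neighborhood of the boundary edge. Concretely, for each boundary edge $(u', v')$ one isolates the finitely many random variables $\omega_i$ whose single-site translates touch $\{u', v'\}$, and uses a resampling/averaging step over those variables. Since each such neighborhood involves at most $|\Theta|$ coupling constants, one pays a cost of roughly $|\Theta|$ Hölder-type exponent reductions, explaining the shape $s/(2|\Theta|)$ in \eqref{eq:fmb}.

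Once the factors are decoupled, I would control the local factor $\EE(|G_\Lambda(z;x,u')|^{\cdot})$ by a Wegner-type a-priori bound on fractional moments that is available under Assumption~\ref{ass:monotone} (the positivity $u(k) > 0$ for $k \in \partial^{\mathrm i}\Theta$ plays the role that $\bar u > 0$ plays in Theorem~\ref{thm:finite_fm}, and $\rho \in L^\infty$ provides a quantitative Wegner bound). The nonlocal factor $\EE(|G_\Gamma(z;v',y)|^{\cdot})$ is the quantity to be iterated. The net result should be a one-step inequality of the form
\[
 \EE\bigl(|G_\Gamma(z;x,y)|^{s/2|\Theta|}\bigr) \leq \frac{C(\rho, u, s)}{\lambda^{s/2|\Theta|}} \sum_{(u',v') \in \partial \Lambda} \EE\bigl(|G_\Gamma(z;v',y)|^{s/2|\Theta|}\bigr),
\]
and for $\lambda$ large enough the prefactor times the surface area of $\partial\Lambda$ (which grows only polynomially in $L$) becomes a number $\gamma < 1$ strictly less than $1$.

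Iterating this bound along a chain of disjoint boxes connecting $x$ to $y$ yields $\EE(|G_\Gamma(z;x,y)|^{s/2|\Theta|}) \leq C \gamma^{\lfloor |x-y|/L \rfloor}$, which is of the required form $C\mathrm{e}^{-m|x-y|}$. The main obstacle, as sketched above, is executing the conditional decoupling cleanly: one must choose the conditioning $\sigma$-algebra so that after averaging the local boundary variables the two Green-function factors become conditionally independent, while the bounds on each factor (Wegner for the local one, smallness in $1/\lambda$ for the geometric factor from the boundary term) survive with explicit constants depending only on $d$, $\rho$, $u$, $s$ and $\lambda$.
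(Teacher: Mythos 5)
The paper itself does not prove this theorem; it is stated verbatim with a citation to \cite{ElgartTV-11}, so there is no internal proof to compare your outline against. Judged on its own merits, your plan reproduces the usual skeleton of the fractional moment method (geometric resolvent identity, subadditivity of $|\cdot|^t$, decouple, iterate), and it correctly identifies decoupling of the correlated Green function factors as the key obstruction. But that step is precisely where all the work of \cite{ElgartTV-11} lies, and your sketch of it does not actually close.

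Concretely: the one-step inequality you announce has the \emph{same} exponent $s/(2|\Theta|)$ on both sides, which is what an iteration needs. However, the mechanism you propose to produce it --- a Cauchy--Schwarz in the boundary variables followed by iterated H\"older --- does not preserve that exponent. Cauchy--Schwarz applied to $\EE\bigl(|G_\Lambda(x,u')|^{t}\,|G_\Gamma(v',y)|^{t}\bigr)$ doubles the exponent of the nonlocal factor to $2t$, and each further H\"older step changes it again; you then have a chain of inequalities with moving exponents rather than a closed recursion at $t = s/(2|\Theta|)$. What is actually needed is a genuine decoupling \emph{lemma}: a statement that a conditional expectation of $|G_\Gamma|^t$ given the boundary $\omega$'s is comparable to the unconditional one, so that a constant, not a fractional moment at a different power, is extracted. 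Producing such a statement in the correlated, sign-indefinite setting is the new content of \cite{ElgartTV-11}, and you have not indicated how it would be proved.

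This is where your treatment of Assumption~\ref{ass:monotone} falls short. You say the positivity $u(k)>0$ for $k\in\partial^{\mathrm i}\Theta$ ``plays the role that $\overline u>0$ plays in Theorem~\ref{thm:finite_fm}.'' It does not. The hypothesis $\overline u>0$ of Assumption~\ref{ass:ubar} is what makes the a-priori bound work, because simultaneously shifting \emph{all} coupling constants shifts the total potential monotonically. The hypothesis of Assumption~\ref{ass:monotone} is used differently: varying a single $\omega_i$ changes $V_\omega(y)$ by $u(y-i)\,\Delta\omega_i$, and positivity on $\partial^{\mathrm i}\Theta$ means this change has a \emph{definite sign at the boundary sites} $y\in i+\partial^{\mathrm i}\Theta$. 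That boundary monotonicity is what allows one to deplete along $\partial\Lambda$ and obtain a conditional estimate of decoupling type. Without spelling out how this monotonicity enters, the decoupling remains a black box and the proof is incomplete. I would also flag that the restriction $s\in(0,1/3)$ in the statement is a symptom of this: the working exponent must be small enough to leave room for the a-priori bound, the Cauchy--Schwarz, and the $|\Theta|$-fold H\"older, and your sketch gives no accounting that recovers this threshold.
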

As we will comment below, exponential bounds of the type \eqref{eq:fmb} allow to conclude spectral localization.
\par
For $L > 0$ we denote by $\Lambda_L = \{y \in \ZZ^d \colon \lvert y \rvert_\infty \leq L\}$ the cube centered at the origin.
The following result is a bound on the expected number of eigenvalues in a given energy region, for a finite cube random Hamiltonian. 
\begin{theorem}[\cite{PeyerimhoffTV-11,LeonhardtPTV-13}] \label{thm:wegner}
 Let Assumption~\ref{ass:exponential} be satisfied and $\lambda > 0$. Then there exists $C_{\rm W} > 0$ depending only on $u$, such that for any $L > 0$ and any bounded interval $I \subset \RR$
 \[
  \EE \left( \operatorname{Tr} \chi_I (H_{\omega , \Lambda_L}) \right) \leq \lambda^{-1} C_{\rm W} \lVert \rho \rVert_{\rm Var} \lvert I \rvert (2l+1)^{2d + N} ,
 \]
 where $N$ is defined by Eq.~\eqref{eq:cF}.
\end{theorem}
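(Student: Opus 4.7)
The plan is to follow the spectral averaging strategy that underlies Wegner estimates for discrete alloy-type potentials with possibly sign-changing single-site potential, combined with a finite-volume transformation of coupling constants adapted to the generating function $F$. The point is that the dependence of $V_\omega(x)$ on the i.i.d.\ variables $(\omega_i)_{i \in \ZZ^d}$ is a convolution with $u$, hence mixes many coordinates and is in general non-monotone; so we cannot use the standard Anderson argument which differentiates eigenvalues with respect to a single $\omega_i$.

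First I would rewrite the expectation as
\[
\EE \bigl(\operatorname{Tr} \chi_I(H_{\omega,\Lambda_L})\bigr) = \sum_{x \in \Lambda_L} \int_{\RR^{\ZZ^d}} \langle \delta_x, \chi_I(H_{\omega,\Lambda_L}) \delta_x \rangle \prod_{k} \rho(\omega_k)\, \drm\omega_k,
\]
and replace $\chi_I$ by a smooth bounded approximation, so that the resulting spectral quantity becomes the integral over $E \in I$ of $\langle \delta_x, \xi_E'(H_{\omega,\Lambda_L}) \delta_x \rangle$ for suitable $\xi_E$. The task is then to bound the one-parameter average of such a resolvent-like object in $\omega$ by $\lvert I \rvert$ times a controlled constant.

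The core step is the construction of a finite-range "inverse" of the convolution operator $u\ast$. Using the generating function $F$ and the multi-index $I_0$ of \eqref{eq:cF}, one constructs finitely supported coefficients $(t_k)$ of size $\mathcal{O}(1)$ such that $\sum_k t_k\, u(\cdot - k)$ equals $\delta_0$ on $\Lambda_L$ minus a boundary layer of thickness of order $N = \lvert I_0 \rvert_1$; equivalently, the linear combination of directional derivatives $\sum_k t_k \partial_{\omega_k}$ acts on $\lambda V_\omega$ exactly as $\lambda \partial_{V(x)}$ on the bulk of $\Lambda_L$, with an explicit remainder supported near $\partial \Lambda_L$. Integration by parts in the $\omega_k$ variables against this test combination, using that $\rho \in \operatorname{BV}(\RR)$, produces the factor $\lambda^{-1} \lVert \rho \rVert_{\rm Var} \lvert I \rvert$ from the fundamental theorem of calculus applied to $\xi_E$, while the finitely many $t_k$'s contribute only a constant depending on $u$.

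The main obstacle, and what forces the polynomial volume factor $(2L+1)^{2d+N}$, is that the single-site potential has only exponential decay rather than compact support, so the "inverse" coefficients $(t_k)$ in fact have an exponentially small tail and need to be truncated at a scale $\ell \lesssim \log L$ while keeping the error absorbable. A factor $(2L+1)^d$ arises from the sum $\sum_{x \in \Lambda_L}$ in the trace, another from the sum over boundary coupling constants that the transformation cannot decouple, and the additional $(2L+1)^N$ encodes the $N$ layers of boundary corrections needed because only the $I_0$-th derivative of $F$ at $\mathbf{1}$ is nonzero. Controlling these boundary contributions uniformly in $L$, via the exponential decay of $u$ and a Cauchy estimate on $F$ in the polydisk $D_\delta$, is the technically delicate step; once it is done, assembling the estimates yields the claimed bound with $C_{\rm W}$ depending only on $u$ (through $c_u$, $\alpha$, $N$ and $\lVert u \rVert_{\ell^1}$).
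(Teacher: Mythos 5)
Your overall strategy (replace $\chi_I$ by a smooth function, differentiate in a direction $t=(t_k)$ in $\omega$-space so that the potential moves monotonically, integrate by parts using $\rho\in\operatorname{BV}$, and pick up $\lambda^{-1}$ from scaling) is the right one. But the heart of the matter — the choice of $(t_k)$ — is wrong in your proposal, and the error propagates into your accounting for the volume factor. You claim that one can pick $(t_k)$ of size $\mathcal O(1)$ with $\sum_k t_k\,u(\cdot-k)\approx\delta_0$ on the bulk of $\Lambda_L$. That is the construction of Assumption~\ref{ass:circulant} (the Wiener $1/f$-inverse of the convolution operator, used for the Minami estimate): it needs $\hat u$ to be nonvanishing, and its output is a genuine approximate delta with $\ell^1$-bounded coefficients. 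Under Assumption~\ref{ass:exponential} this is simply not available; in particular, if $\overline u=\sum_k u(k)=0$ then no summable $(t_k)$ can reproduce $\delta_0$ on an increasing sequence of boxes, since summing over the box would force $0\approx 1$.

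What the cited proof actually does is different. One takes $t_k=p(k)$ with $p$ a \emph{polynomial of degree $N=|I_0|_1$}. Then, because the moments $\sum_j j^\beta u(j)$ vanish for $|\beta|<N$ while the $I_0$-th one does not (this is exactly what \eqref{eq:cF} encodes), the combination $w(x)=\sum_k t_k\,u(x-k)$ becomes a nonzero \emph{constant} in $x$, not a delta function. This provides the uniform monotone direction. The coefficients are therefore not $\mathcal O(1)$: they grow like $|k|^N$, so $\sum_{k}|t_k|$ over the relevant (truncated, thanks to the exponential decay of $u$) range of $k$ is of order $(2L+1)^{N+d}$. Combined with the $(2L+1)^d$ from the trace, this gives $(2L+1)^{2d+N}$. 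So the $(2L+1)^N$ comes from the polynomial growth of the vector-field coefficients, not from ``$N$ layers of boundary corrections'' as you wrote. You would need to replace your inverse-convolution step by this polynomial vector field and redo the bookkeeping of the volume factors accordingly.
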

The above inequality shows the Lipschitz-continuity of the distribution function $E\mapsto \EE \left( \operatorname{Tr} \chi_{(-\infty,E]} (H_{\omega , \Lambda_L}) \right)$.
Hence, we have result about regularity of spectral data. 
\par
The following theorem is a so-called Minami-estimate. It generalizes the key estimate in \cite{Minami-96}.
Minami's result applies to the standard Anderson model, while our theorem concerns also certain correlated 
random potentials. 
\begin{theorem}[\cite{TautenhahnV-13}] \label{thm:minami}
Let $\Lambda \subset \mathbb{Z}^d$ be finite and Assumption \ref{ass:circulant} satisfied. 
Then we have for all $x,y \in \Lambda$ with $x \not = y$, all $z \in \mathbb{C}$ with $\Im z > 0$ and all $\lambda > 0$
\[
 \mathbb{E} \left( \det \left\{ \Im \begin{pmatrix}
  G_{\omega , \Lambda} (z;x,x) & G_{\omega , \Lambda} (z;x,y) \\
  G_{\omega , \Lambda} (z;y,x) & G_{\omega , \Lambda} (z;y,y)
 \end{pmatrix}  \right\} \right) \leq \left(\frac{\pi}{\lambda}\right)^2 C_{\rm Min} ,
\]
where 
\[
 C_{\rm Min} = \frac{C_u^2}{4} \max\{\lVert \rho' \rVert_1^2 , \lVert \rho'' \rVert_1\} 
\]
and $C_u$ is the constant from Eq.~\eqref{eq:Cu}.
\end{theorem}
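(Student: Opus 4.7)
The plan is to imitate Minami's rank-two spectral-averaging argument while bypassing the fact that in the alloy-type setting a single coupling $\omega_i$ perturbs the potential at several sites at once. The enabling input from Assumption~\ref{ass:circulant} is Wiener's $1/f$-theorem, which makes the convolution operator $A$ with $A(j,k) = u(j-k)$ boundedly invertible on $\ell^1 (\ZZ^d)$ with $\lVert A^{-1}\rVert_1 = C_u$. For the chosen pair $x\neq y \in \Lambda$ I would pick $\alpha,\beta \in \ell^1(\ZZ^d)$ with $A\alpha = \delta_x$, $A\beta = \delta_y$ and $\lVert\alpha\rVert_1, \lVert\beta\rVert_1 \leq C_u$. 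The affine shift $\omega \mapsto \omega + t\alpha + s\beta$ then increments $\eta_x$ by $t$ and $\eta_y$ by $s$ while leaving every other $\eta_k$ untouched. Writing $\omega = \omega^\perp + t\alpha + s\beta$ with $\omega^\perp$ in a complementary codimension-two subspace reduces the problem to a genuinely two-parameter spectral-averaging question for $H_{\omega,\Lambda}$, viewed as a rank-two perturbation of a background Hamiltonian $H_0(\omega^\perp)$ depending only on the transverse coordinates.

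Next I would invoke the rank-two Krein/Schur-complement identity underlying Minami's original argument: for every $z$ with $\Im z > 0$ and every fixed $\omega^\perp$ there is a non-negative function $\Phi_{\omega^\perp}\colon \RR^2 \to [0,\pi^2]$ (essentially a product of two arctangent-type Nevanlinna integrals) such that the pointwise bound
\[
 \det\!\Bigl(\Im G_{\omega,\Lambda}(z;\{x,y\},\{x,y\})\Bigr) \;\leq\; \frac{1}{4\lambda^2}\, \partial_t \partial_s \Phi_{\omega^\perp}(t,s)
\]
holds, the factor $\lambda^{-2}$ arising because each $t$- or $s$-derivative really differentiates $H_{\omega,\Lambda}$ in $\lambda t$, $\lambda s$. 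The push-forward of $\otimes \mu$ onto the plane spanned by $\alpha,\beta$, conditional on $\omega^\perp$, has joint density
\[
 \tilde\rho_{\omega^\perp}(t,s) \;=\; \prod_{i \in \ZZ^d}\rho\bigl(\omega^\perp_i + t\alpha_i + s\beta_i\bigr).
\]
I would integrate the pointwise bound against $\tilde\rho_{\omega^\perp}(t,s) \, \drm t \, \drm s$ and integrate by parts twice to transfer both derivatives onto $\tilde\rho_{\omega^\perp}$. Compact support of $\mu$ kills the boundary terms, and the hypothesis $\rho \in W^{2,1}$ makes the two integrations by parts legal.

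The main obstacle is then the quantitative $L^1$-control of $\partial_t \partial_s \tilde\rho_{\omega^\perp}$ after also averaging over $\omega^\perp$. By the chain rule,
\[
 \partial_t \partial_s \tilde\rho_{\omega^\perp} \;=\; \sum_{i}\alpha_i\beta_i\, \rho''(\omega_i)\!\!\prod_{j\neq i}\!\!\rho(\omega_j) \;+\!\sum_{i\neq k}\alpha_i\beta_k\, \rho'(\omega_i)\rho'(\omega_k)\!\!\prod_{j \neq i, k}\!\!\rho(\omega_j),
\]
where $\omega_i = \omega^\perp_i + t\alpha_i + s\beta_i$. Undoing the disintegration to recover the product measure and integrating each $\omega_i$ against $\drm \mu$, these two contributions turn into $\lVert \rho''\rVert_1 \sum_i|\alpha_i\beta_i|$ and $\lVert \rho'\rVert_1^2 \sum_{i\neq k}|\alpha_i||\beta_k|$ respectively, both bounded via H\"older by $\lVert\alpha\rVert_1\lVert\beta\rVert_1 \leq C_u^2$. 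Combining this with the prefactor inherited from $\lVert\Phi_{\omega^\perp}\rVert_\infty \leq \pi^2$ produces a bound of the form $(\pi/\lambda)^2\, C_u^2 \max\{\lVert \rho'\rVert_1^2, \lVert \rho''\rVert_1\}$ up to an absolute combinatorial factor, which one checks works out to $1/4$ after an optimal accounting of the two chain-rule contributions. The delicate verification beyond linear algebra is justifying the disintegration, since $\alpha,\beta$ are merely in $\ell^1$; but the finiteness of $\Theta$ and of $\Lambda$ means $H_{\omega,\Lambda}$ depends on finitely many coordinates of $\omega$, so a truncation-limit argument using $\lVert A^{-1}\rVert_1 < \infty$ reduces everything to a finite-dimensional change of variables.
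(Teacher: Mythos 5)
Your proposal is correct and follows essentially the same strategy as the paper's reference \cite{TautenhahnV-13}: invert the convolution operator $A$ via Wiener's $1/f$-theorem to obtain directions $\alpha=A^{-1}\delta_x$, $\beta=A^{-1}\delta_y$ in $\omega$-space that shift $\eta_x$, $\eta_y$ independently of the other $\eta_k$, then run Minami's rank-two spectral-averaging bound in the variables $(t,s)$ and move the two derivatives onto the product density by integration by parts, producing $C_u^2\max\{\lVert\rho'\rVert_1^2,\lVert\rho''\rVert_1\}$ from $\lVert\alpha\rVert_1\lVert\beta\rVert_1\le C_u^2$ and the two chain-rule contributions, and $(\pi/\lambda)^2/4$ from the bounded Nevanlinna-type potential. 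The only places you leave as sketches (exact normalization of $\Phi_{\omega^\perp}$, the constant Jacobian of the $(\omega^\perp,t,s)$-parameterization which cancels upon undoing the disintegration, and restricting $\alpha,\beta$ to $\Lambda-\Theta$ rather than a genuine truncation-limit) are routine and do not hide a gap.
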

Minami's estimate  has an important corollary, a bound on the probability of finding at least two eigenvalues of $H_{\omega , \Lambda}$ in a given energy interval. 
\begin{corollary} \label{cor:minami}
 Let Assumption \ref{ass:circulant} be satisfied, $\Lambda \subset \mathbb{Z}^d$ finite and $I \subset \mathbb{R}$ be a bounded interval. Then we have for all $\lambda > 0$
\begin{align}
\label{eq:Cebysev}
 \mathbb{P} \bigl\{ \operatorname{Tr} \chi_{I} (H_{\omega , \Lambda}) \geq 2 \bigr\} 
&\leq 
\frac{1}{2}\mathbb{E} \bigl( (\operatorname{Tr} \chi_{I} (H_{\omega , \Lambda}))^2  - \operatorname{Tr} \chi_{I} (H_{\omega , \Lambda}) \bigr) 
\\ \nonumber
&\leq 
\frac{1}{2}\left(\frac{\pi}{\lambda}\right)^2 C_{\rm Min} \lvert I \rvert^2 \lvert \Lambda \rvert^2 .
\end{align}
\end{corollary}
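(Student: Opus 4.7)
The first inequality is a Chebyshev/Markov step: for any nonnegative integer-valued random variable $N$ one has $\Eins_{\{N \geq 2\}} \leq \tfrac{1}{2} N(N-1)$ pointwise (since $N(N-1) \geq 2$ whenever $N \geq 2$, and both sides vanish otherwise). Setting $N = \operatorname{Tr}\chi_{I}(H_{\omega,\Lambda})$ and taking $\EE$ immediately yields the first inequality in \eqref{eq:Cebysev}.

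For the second inequality, the plan is to decompose the right-hand side as a sum over pairs of sites and then control each term by the Minami estimate of Theorem~\ref{thm:minami}. Writing $P = \chi_I(H_{\omega,\Lambda})$ and using $\operatorname{Tr} P^2 = \operatorname{Tr} P$ (because $P$ is an orthogonal projection), one has the algebraic identity
\[
(\operatorname{Tr} P)^2 - \operatorname{Tr} P = (\operatorname{Tr} P)^2 - \operatorname{Tr} P^2 = \sum_{x, y \in \Lambda}\bigl[P(x,x) P(y,y) - P(x,y) P(y,x)\bigr] = \sum_{\substack{x, y \in \Lambda \\ x \neq y}} \det M_P(x,y) ,
\]
where $M_P(x,y) = \bigl(\begin{smallmatrix} P(x,x) & P(x,y) \\ P(y,x) & P(y,y) \end{smallmatrix}\bigr)$ and the diagonal terms vanish. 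It then suffices to prove $\EE[\det M_P(x,y)] \leq |I|^2 (\pi/\lambda)^2 C_{\rm Min}$ uniformly in $x \neq y$, since the bound in the corollary follows by summing over the at most $|\Lambda|^2$ pairs.

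The key step is to compare $\det M_P(x,y)$ pointwise in $\omega$ to a single-energy resolvent determinant of the type handled by Theorem~\ref{thm:minami}, via Poisson-kernel majorization of the matrix-valued spectral measure. Let $\nu = (\nu_{uv})_{u,v \in \{x,y\}}$ denote the $2 \times 2$ positive semidefinite matrix-valued spectral measure of $H_{\omega,\Lambda}$ associated with the two sites, so that $M_P(x,y) = \nu(I)$; let $E_0$ be the center of $I$ and set $\eta = |I|/2$. From the spectral theorem,
\[
\Im M(E_0 + i\eta; x, y) = \int_{\RR} \frac{\eta}{(E - E_0)^2 + \eta^2}\, \drm\nu(E) ,
\]
and a short computation shows that the scalar function $|I|\cdot \eta/\bigl((E - E_0)^2 + \eta^2\bigr) - \Eins_I(E)$ is nonnegative on all of $\RR$ (it equals $1$ at the endpoints of $I$, exceeds $1$ in the interior, and is trivially $\geq 0$ outside). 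Integrating this scalar inequality against the PSD-valued measure $\nu$ yields the matrix inequality $\nu(I) \leq |I|\,\Im M(E_0 + i|I|/2; x, y)$ in the positive semidefinite order. Since $0 \leq A \leq B \Rightarrow \det A \leq \det B$ for $2 \times 2$ PSD matrices (a consequence of Minkowski's determinant inequality), we obtain, pointwise in $\omega$, the bound $\det M_P(x,y) = \det\nu(I) \leq |I|^2 \det\Im M(E_0 + i|I|/2; x, y)$. Taking $\EE$ and applying Theorem~\ref{thm:minami} to the right-hand side finishes the proof. The only non-routine ingredient is the matrix-valued Poisson majorization: although $\nu_{xy}$ is a signed scalar measure, $\nu$ is PSD as a matrix-valued measure, and it is this matrix-level positivity that allows one to pass from pointwise scalar positivity of the integrand to positivity of the integral in the PSD order.
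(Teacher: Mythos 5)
Your proof is correct, and it is the standard derivation of the two-eigenvalue bound from a Minami-type estimate: the paper states the corollary without proof (it is cited from the reference where Theorem~\ref{thm:minami} originates), and that derivation proceeds exactly as you describe --- the Chebyshev bound $\Eins_{\{N\ge 2\}}\le\frac12 N(N-1)$, the expansion $(\operatorname{Tr}P)^2-\operatorname{Tr}P^2=\sum_{x\ne y}\det M_P(x,y)$, the Poisson-kernel majorization $\nu(I)\le|I|\,\Im M(E_0+\mathrm{i}|I|/2;x,y)$ in the PSD order, monotonicity of the $2\times2$ determinant on PSD matrices, and finally Theorem~\ref{thm:minami}. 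One small wording slip in the parenthetical: it is the Poisson factor $|I|\eta/((E-E_0)^2+\eta^2)$ that equals $1$ at the endpoints and exceeds $1$ on the interior, so the \emph{difference} vanishes at the endpoints and is strictly positive inside; the conclusion (nonnegativity) that you use is of course unaffected.
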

Thus we can control the probability that two eigenvalues fall close to each other: Again a regularity statement for spectral data. 
\section{Physical implications of the regularity of spectral data}
\label{s:physical_implications}
One motivation for proving the regularity results of spectral data is that they are the main ingredient for localization proofs. There are different signatures of localization. We discuss two of them: spectral localization and Poisson statistics. Spectral localization or Anderson localization is the phenomenon that there are energy intervals $I$ such that for almost all configurations of the randomness, the spectrum of $H_\omega$ consists only of eigenvalues. 
\begin{definition}
 Let $I \subset \RR$. We say that $H_\omega$ exhibits exponential localization in $I$ if, for almost all $\omega \in \Omega$, $\sigma_{\rm c} (H_\omega) \cap I = \emptyset$ and the eigenfunctions corresponding to the eigenvalues of $H_\omega$ in $I$ decay exponentially. 
If $I = \RR$ we simply say that $H_\omega$ exhibits exponential localization. 
\end{definition}
Beside this spectral interpretation of localization there are also interpretations from the dynamical point of view. Since we put our focus here on spectral localization we do not give a definition here. 
However, for the discussion of various notions of dynamical localization we refer to \cite{Klein-08}.
\par
In space dimension $d>1$ there are exactly two methods to prove localization: the multiscale analysis \cite{FroehlichS-83,FroehlichMSS-85} and the fractional moment method \cite{AizenmanM-93}. The output of the fractional moment method is the exponential decay of an averaged fractional power of the Green function, i.e.\ an inequality of the form \eqref{eq:fmb}. There is a variety of methods for concluding localization from this so-called fractional moment bound \eqref{eq:fmb}, for example using the Simon-Wolff criterion \cite{AizenmanM-93}, via the RAGE-theorem \cite{Graf-94,Hundertmark-00}, using a method called eigenfunction correlators \cite{AizenmanENSS-06}, or going the way via the output of multiscale analysis \cite{ElgartTV-10,ElgartTV-11}. In this sense, it is not surprising that Theorem~\ref{thm:fmb} yields the following localization result.
\begin{theorem}[\cite{ElgartTV-11}] \label{thm:loc1}
Let Assumption~\ref{ass:monotone} be satisfied and $\lambda$ sufficiently large. Then, for almost all $\omega \in \Omega$, $H_\omega$ exhibits exponential localization. 
\end{theorem}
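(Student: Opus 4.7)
The plan is to derive Theorem~\ref{thm:loc1} from the exponential decay of fractional moments established in Theorem~\ref{thm:fmb} by adapting the fractional moment method to the correlated setting. The essential input is the bound
\[
\EE\bigl(\lvert G_{\omega,\Gamma}(z;x,y)\rvert^{s/(2\lvert\Theta\rvert)}\bigr) \leq C e^{-m\lvert x-y\rvert},
\]
uniform in the finite subset $\Gamma \subset \ZZ^d$ and in $z \in \CC \setminus \RR$, valid for sufficiently large $\lambda$ under Assumption~\ref{ass:monotone}. The obstacle throughout is that the coupling constants $\omega_i$ each influence the potential values $\eta_x$ for all $x$ with $x-i \in \Theta$, so the rank-one perturbation identities and spectral averaging tricks available in the independent Anderson case have to be replaced by arguments that respect a finite but nontrivial dependency range $\operatorname{diam}\Theta$.

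First, I would fix a bounded energy interval $I \subset \RR$ and, for cubes $\Lambda_L$ centered at sites of $\ZZ^d$, combine Theorem~\ref{thm:fmb} with a Chebyshev inequality to show that, with probability at least $1 - e^{-\xi L}$ for some $\xi>0$, the cube is \emph{regular} in the sense that its finite-volume Green function decays exponentially from any interior site to the boundary. This furnishes the initial length-scale estimate required by the multiscale analysis. Simultaneously, the Wegner-type bound of Theorem~\ref{thm:wegner} (whose hypotheses are implied by Assumption~\ref{ass:monotone}, possibly after a small strengthening of $\rho$) controls the probability that $H_{\omega,\Lambda_L}$ has an eigenvalue very close to $E \in I$, and therefore handles potential boundary resonances.

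Second, I would run the multiscale induction in the form adapted to alloy-type potentials in \cite{ElgartTV-10,ElgartTV-11}: the key point is that two cubes whose $\operatorname{diam}\Theta$-inflations are disjoint carry stochastically independent potentials, so the standard two-cube independence step of the MSA closes after a fixed inflation. Iterating scales upgrades the initial estimate to exponential decay of the finite-volume Green function on all scales with probability tending to one sufficiently fast. The final step is to convert this MSA output into almost-sure exponential decay of polynomially bounded generalized eigenfunctions of the infinite-volume operator via a Shnol-type argument combined with Borel--Cantelli; this then yields pure point spectrum in $I$ with exponentially decaying eigenfunctions, and one takes $I = \RR$ since the regime of large $\lambda$ places the spectrum in a region where the fractional moment bound holds for all real energies.

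The hard part will be reconciling the correlated structure with the independence assumed in the standard MSA/Borel--Cantelli combinatorics. Concretely, the doubling-argument that compares boxes at scale $L_{k+1}$ to pairs of boxes at scale $L_k$ requires genuine independence of the local potentials, which is only available after inflating each cube by $\operatorname{diam}\Theta$; one must verify that this fixed overhead does not destroy the geometric decay rates of the induction. A second technical point is the passage from the fractional-moment-based single-scale estimate to the MSA single-site probability estimate, since the fractional moment bound involves an exponent $s/(2\lvert\Theta\rvert)$ that is small and must be used together with an a~priori deterministic bound to produce a polynomial-in-$L$ probability estimate. Both points are handled in \cite{ElgartTV-11}, and I would follow that route.
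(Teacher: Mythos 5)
Your high-level strategy --- taking the fractional moment bound of Theorem~\ref{thm:fmb} as input and converting it to localization --- is the route the paper indicates (the discussion preceding the theorem says \cite{ElgartTV-11} goes ``the way via the output of multiscale analysis''). However, there is a genuine conceptual misstep in how you propose to carry out the conversion: you treat the FM bound as supplying only the \emph{initial} length-scale estimate and then run the full multiscale induction on top of it. This is unnecessary, and it breaks your argument. Theorem~\ref{thm:fmb} holds uniformly over all finite $\Gamma \subset \ZZ^d$ and all $z \in \CC\setminus\RR$, so Chebyshev's inequality already yields, at every scale $L$ simultaneously, exponential decay of the finite-volume Green function with probability at least $1 - C e^{-\xi L}$. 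That \emph{is} the conclusion of the multiscale analysis. The correct meaning of ``via the output of multiscale analysis'' is that one verifies the FM bound implies the MSA output directly, and then invokes the known machinery (Shnol argument plus Borel--Cantelli over a sparse sequence of scales) converting that output into pure point spectrum with exponentially decaying eigenfunctions; no iteration is performed.

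The gap becomes fatal at the point where you invoke Theorem~\ref{thm:wegner} as the Wegner input for the induction step. That theorem is stated under Assumption~\ref{ass:exponential}, which requires $\rho \in \operatorname{BV}(\RR)$, whereas Theorem~\ref{thm:loc1} assumes only Assumption~\ref{ass:monotone}, which requires merely $\rho \in L^\infty(\RR)$ together with the sign condition on $u|_{\partial^{\rm i}\Theta}$. Your parenthetical ``whose hypotheses are implied by Assumption~\ref{ass:monotone}, possibly after a small strengthening of $\rho$'' is not correct: an $L^\infty$ density of compact support need not be of bounded variation, so this is a genuine additional hypothesis, not a cosmetic strengthening. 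A central selling point of the fractional moment method is precisely that it bypasses the separate Wegner estimate needed to close the MSA induction. Once you observe that the FM bound gives the MSA output at all scales at once, the dependency on Theorem~\ref{thm:wegner} disappears and the proof goes through under Assumption~\ref{ass:monotone} alone, exactly as the paper claims.
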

The multiscale ananysis is an induction argument which shows the exponential decay of the Green function with high probability on larger and larger scales. The induction anchor is the so-called initial length scale estimate. The main ingredient for the induction step is a Wegner estimate, which is formulated for our specific model in Theorem~\ref{thm:wegner}. Since the initial length scale estimate follows from a Wegner estimate in the case of large disorder, we obtain the following improvement of Theorem~\ref{thm:loc1}.
\begin{theorem}[\cite{LeonhardtPTV-13}] \label{thm:loc2}
 Let Assumption~\ref{ass:exponential} be satisfied and $\lambda$ be sufficiently large. Then, for almost all $\omega \in \Omega$, $H_\omega$ exhibits exponential localization.
\end{theorem}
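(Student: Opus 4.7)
The plan is to deduce Theorem~\ref{thm:loc2} from the multiscale analysis (MSA) scheme, with the Wegner estimate of Theorem~\ref{thm:wegner} supplying the key probabilistic input at every scale, and the large-disorder hypothesis supplying the initial length scale estimate. The overall structure mirrors the classical Fröhlich--Spencer--von Dreifus--Klein induction, but we must be careful because the discrete alloy-type potential is \emph{not} monotone in the i.i.d.\ parameters $\omega_i$: the coefficients $u(k)$ may change sign and have infinite support (only exponentially decaying). Thus we cannot invoke off-the-shelf MSA verbatim; we need the version tailored to discrete alloy-type models (as developed in \cite{ElgartTV-11,LeonhardtPTV-13}) that handles the long-range, non-monotone dependence structure.

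First I would set up the standard MSA framework on boxes $\Lambda_L$, with the usual notions of $(\gamma,E)$-good and $(\gamma,E)$-regular boxes. The induction step requires two probabilistic ingredients at scale $L$: (i) a Wegner estimate giving polynomial-in-$|\Lambda_L|$ control on $\PP(\operatorname{dist}(\sigma(H_{\omega,\Lambda_L}),E)\le\epsilon)$, and (ii) an independence-at-a-distance property between disjoint boxes. Ingredient (i) is delivered directly by Theorem~\ref{thm:wegner}: a Chebyshev estimate on $\operatorname{Tr}\chi_I(H_{\omega,\Lambda_L})$ yields $\PP(\operatorname{dist}(\sigma(H_{\omega,\Lambda_L}),E)\le\epsilon)\le \lambda^{-1}C_{\rm W}\lVert\rho\rVert_{\rm Var}\, 2\epsilon\,(2L+1)^{2d+N}$, which is exactly the Lipschitz-in-$\epsilon$, polynomial-in-$L$ bound that MSA consumes. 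Ingredient (ii) is the subtle point: because the alloy-type potential on $\Lambda_L$ depends on \emph{all} $\omega_i$, $i\in\ZZ^d$, boxes are never strictly independent. Here I would use the conditioning-and-exponential-tails argument of \cite{LeonhardtPTV-13}: condition on the ``far'' $\omega_i$'s and absorb their contribution into an effective bounded perturbation thanks to the exponential decay in Assumption~\ref{ass:exponential}; the remaining ``near'' $\omega_i$'s for two disjoint boxes involve disjoint blocks of i.i.d.\ parameters, up to a controlled exponentially small tail that is harmless for the MSA.

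Second, I would establish the initial length scale estimate (ILSE) at large disorder. For $\lambda$ large and $E$ in any bounded set, a Combes--Thomas bound on the resolvent $G_{\omega,\Lambda_{L_0}}(E;x,y)$ together with the Wegner estimate of Theorem~\ref{thm:wegner} shows that for fixed $L_0$, with probability at least $1-L_0^{-p}$ (any desired $p$), the box $\Lambda_{L_0}$ is $(\gamma,E)$-good with $\gamma$ proportional to $\log\lambda$. Large $\lambda$ makes the hypothesis of the Combes--Thomas estimate hold on most of the spectrum and makes the exceptional probability small enough to trigger the induction.

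Third, I would run the MSA induction step, obtaining exponential decay of finite-volume Green functions on all scales with rapidly growing probability. Finally, to pass from this decay to spectral localization, I would invoke one of the standard conversion mechanisms---the simplest being the Simon--Wolff / eigenfunction correlator route of \cite{AizenmanM-93,AizenmanENSS-06}, or the direct route via spectral averaging through the MSA output as in \cite{ElgartTV-10,ElgartTV-11}. Either yields that almost surely, $H_\omega$ has pure point spectrum in $\RR$ (since large $\lambda$ pushes the localized region to cover the whole spectrum) with exponentially decaying eigenfunctions. The main obstacle I expect is the independence-at-a-distance step: making precise how the exponentially small ``tail coupling'' between well-separated boxes is absorbed into the probabilistic estimates of MSA without destroying the Wegner bound, which in our setting is sensitive to the $L^{2d+N}$ volume factor.
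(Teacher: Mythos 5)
Your proposal follows essentially the same route the paper sketches: multiscale analysis driven by the Wegner estimate of Theorem~\ref{thm:wegner}, with the initial length scale estimate obtained from that same Wegner bound combined with the large-disorder hypothesis. The technical points you flag as potential obstacles --- the long-range dependence between boxes coming from the exponentially decaying (not compactly supported) single-site potential, the Combes--Thomas step in the ILSE, and the conversion from finite-volume Green-function decay to spectral localization --- are precisely the ingredients handled in the cited reference \cite{LeonhardtPTV-13}, so the plan matches.
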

Due to the lack of monotonicity it is not possible by standard methods to obtain an initial length scale estimate in the case of small disorder $\lambda > 0$ under the general Assumption~\ref{ass:exponential}. However, if the single-site potential has only a small negative part, it is possible to deduce an initial length scale estimate at the bottom of the spectrum by using perturbative arguments. This has been implemented for compactly supported single-site potentials in the continuous setting in \cite{Veselic-02a} and adapted to exponentially decaying (not compactly supported) single-site potentials in the discrete setting in \cite{LeonhardtPTV-13}. Together with the Wegner estimate from Theorem~\ref{thm:wegner} one obtains localization via multiscale analysis in the weak disorder regime.
\begin{assumption} \label{ass:small_neg}
We say that Assumption~\ref{ass:small_neg} is satisfied for $\delta > 0$, if there exists a decomposition $u = u_+ - \delta u_-$ with $u_+ , u_- \in \ell^1 (\ZZ^d ; \RR^+_0)$, and $\lVert u_- \rVert_1 \leq 1$. 
For the measure $\mu$ we assume $\supp \mu = [0,\omega_+]$ for some $\omega_+ > 0$.
\end{assumption}
 \begin{theorem}[\cite{LeonhardtPTV-13}] \label{thm:loc:weak}
Let Assumption~\ref{ass:exponential} and \ref{ass:ubar} be satisfied and $\lambda > 0$. Then there exists $\delta > 0$ and $\epsilon > 0$, such that if Assumption~\ref{ass:small_neg} is satisfied for $\delta$, then, for almost all $\omega \in \Omega$, $H_\omega$ exhibits exponential localization in $[-\epsilon , \epsilon]$.
\end{theorem}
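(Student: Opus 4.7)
The plan is to establish exponential localization by the standard multiscale analysis (MSA), which needs two ingredients: a Wegner estimate, and an initial length scale estimate (ILSE) at the relevant energies. The Wegner estimate is handed to us by Theorem~\ref{thm:wegner}, valid for all $\lambda>0$ under Assumption~\ref{ass:exponential}. All the new work lies in establishing the ILSE near the bottom of the spectrum, and the strategy is perturbative, exploiting the small negative part of $u$ via Assumption~\ref{ass:small_neg}.

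First I would use the splitting $u=u_+-\delta u_-$ provided by Assumption~\ref{ass:small_neg} to decompose $V_\omega = V_\omega^+ - \delta V_\omega^-$, where $V_\omega^\pm(x) := \sum_i \omega_i u_\pm(x-i)\geq 0$ (since $\supp \mu=[0,\omega_+]$ and $u_\pm\geq 0$). The comparison operator $H_\omega^+ := -\Delta + \lambda V_\omega^+$ is then a monotone discrete alloy-type operator whose single-site potential $u_+$ is non-negative, still exponentially decaying (as $u_+\leq |u|+\delta u_-$ with both summands exponentially decaying), and satisfies $\sum_k u_+(k) \geq \overline u>0$ by Assumption~\ref{ass:ubar}. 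Moreover $\lVert V_\omega^-\rVert_\infty \leq \omega_+\lVert u_-\rVert_1\leq \omega_+$, so $H_\omega$ and $H_\omega^+$ differ in operator norm by at most $\lambda\delta\omega_+$; this uniform bound is what will allow the perturbative comparison below.

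Second, I would establish an ILSE for the monotone operator $H_\omega^+$ at the bottom $E_+$ of its almost-sure spectrum. Here I would adapt the Lifshitz-tails/ground-state perturbation argument of \cite{Veselic-02a} to the present discrete setting with exponentially decaying (rather than compactly supported) single-site potential, as carried out in \cite{LeonhardtPTV-13}. The output is: there exist $L_0\in\NN$, $\gamma_0>0$ and $\xi>2d$ with
\[
\PP\bigl(\sigma(H_{\omega,\Lambda_{L_0}}^+) \cap [E_+, E_+ + 2\gamma_0] = \emptyset\bigr) \geq 1 - L_0^{-\xi}.
\]
A Combes--Thomas estimate turns this spectral gap into exponential decay of $G_{\omega,\Lambda_{L_0}}^+(E;x,y)$ uniformly in $E\in[E_+, E_++\gamma_0]$. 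Choosing $\delta$ so small that $\lambda\delta\omega_+<\gamma_0/2$, the norm bound on $V_\omega^-$ propagates the spectral gap from $H_{\omega,\Lambda_{L_0}}^+$ to $H_{\omega,\Lambda_{L_0}}$ on the same high-probability event (with width at least $\gamma_0$), and a second Combes--Thomas estimate yields exponential decay of $G_{\omega,\Lambda_{L_0}}(E;\cdot,\cdot)$ at energies lying in an interval of the form $[-\epsilon,\epsilon]$ centered at $E_+$ (after the harmless identification of $E_+$ with $0$). This is the ILSE for $H_\omega$.

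Finally, with the Wegner input Theorem~\ref{thm:wegner} and the ILSE just derived, the standard MSA induction (in the form used in \cite{LeonhardtPTV-13}) iterates through larger scales and yields exponential decay of $G_{\omega,\Lambda_L}(E;\cdot,\cdot)$ uniformly in $E\in[-\epsilon,\epsilon]$ with probabilities tending to $1$ polynomially fast; exponential localization in $[-\epsilon,\epsilon]$ then follows by any of the standard spectral reductions (eigenfunction correlators, Simon--Wolff, or the RAGE-type arguments discussed in the paper). The principal difficulty in this plan is the ILSE itself: because $u_+$ is only exponentially decaying rather than compactly supported, the classical finite-volume Lifshitz-tail argument does not apply off the shelf, since alloy sums inside a box $\Lambda_{L_0}$ depend on $\omega_i$ with $i$ arbitrarily far away. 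One must control the leakage of the alloy-type sum into the complement of $\Lambda_{L_0}$ through a careful truncation, quantified by the exponential bound $|u(k)|\leq C\mathrm{e}^{-\alpha\lVert k\rVert_1}$ from Assumption~\ref{ass:exponential}. Once this is in hand, the perturbative transfer from $H_\omega^+$ to $H_\omega$ and the MSA iteration are essentially routine.
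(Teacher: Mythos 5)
Your proposal follows exactly the strategy that the paper sketches for this theorem (which it does not prove in detail but cites from \cite{LeonhardtPTV-13}): a perturbative initial length scale estimate at the bottom of the spectrum obtained by comparing $H_\omega$ to the monotone operator $H_\omega^+=-\Delta+\lambda V_\omega^+$, combined with the Wegner estimate of Theorem~\ref{thm:wegner}, and fed into the multiscale analysis. You also correctly flag the genuine technical obstruction that the paper alludes to, namely that the exponentially decaying (non-compactly-supported) single-site potential forces a careful truncation argument in the Lifshitz-tail/ILSE step, which is precisely the adaptation from \cite{Veselic-02a} carried out in \cite{LeonhardtPTV-13}.
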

See also \cite{CaoE-12} for a more general result in three space dimensions.

Another signature of localization is Poisson statistics. Physicists expect that there is no level repulsion of energy levels in the localized regime. 
This manifests itself in the sense that the point process associated to the rescaled eigenvalues of $H_{\omega , \Lambda_L}$ converges to a Poisson process. 
\par
To be more precise, we introduce all the basic definitions.
Let $L \in \mathbb{N}$ and and $E_1^\omega (\Lambda_L) \leq E_2^\omega (\Lambda_L) \leq \ldots \leq E_{\lvert \Lambda_L \rvert}^\omega (\Lambda_L)$ be the eigenvalues of $H_{\omega , \Lambda_L}$ repeated according to multiplicity. Since $(H_\omega)_\omega$ is an ergodic family of random operators, the IDS exists as a (non-random) distribution function $N :  \mathbb{R} \to [0,1]$, satisfying for almost all $\omega \in \Omega$
\[
 N(E) = \lim_{L \to \infty} \frac{1}{\lvert \Lambda_L \rvert} \# \{ j \in \mathbb{N} \colon E_j^\omega (\Lambda_L) \leq E \},
\]
at all continuity points of $N$. In particular, if Assumption \ref{ass:circulant} is satisfied, the IDS is known to be Lipschitz continuous \cite{Veselic-10b}. 
Let us now introduce a second hypothesis which may be interpreted as a quantitative growth condition on the IDS or a positivity assumption on the density of states measure.
\begin{assumption} \label{ass:pos}
 Let $E_0 \in \mathbb R$ and $\kappa \geq 0$. We say that Assumption (Pos) is satisfied for $E_0$ and $\kappa $ if for all $a<b$ there exists $C,\epsilon_0 > 0$ such that for all $\epsilon \in (0,\epsilon_0)$ there holds
\[
 \lvert N(E_0 + a \epsilon) - N(E_0 + b \epsilon) \rvert \geq C \epsilon^{1 + \kappa } .
\]
\end{assumption}
For $E_0 \in \mathbb{R}$ we consider the rescaled spectrum $\xi^\omega = (\xi_j^\omega)_{j=1}^{\lvert \Lambda_L \rvert}$, defined by
\begin{equation} \label{eq:scaleev}
 \xi_j^\omega = \xi_j^\omega (L , E_0) = \lvert \Lambda_L \rvert \left(N(E_j^\omega (\Lambda_L)) - N(E_0)\right), \quad j=1,\ldots , \lvert \Lambda_L \rvert ,
\end{equation}
and the associated point process $\Xi : \Omega \to \mathcal{M}_{\rm p}$ given by
\begin{equation} \label{eq:process}
 \Xi^\omega = \Xi^\omega_{L , E_0}  = \sum_{j=1}^{\lvert \Lambda_L \rvert} \delta_{\xi_j^\omega} ,
\end{equation}
where $\delta_x$ is the Dirac measure concentrated at $x$ and $\mathcal{M}_{\rm p}$ is the set of all integer valued Radon measures on $\mathbb{R}$. 
A point process $\Upsilon$ is called Poisson point process with intensity measure $\mu$ if 
\[
 \mathbb{P} \bigl\{ \omega \in \Omega \colon \Upsilon^\omega (A) = k \bigr\} = \mathrm{e}^{-\mu(A)} \frac{\mu (A)^k}{k!}, \quad k=1,2,\ldots 
\]
holds for each bounded Borel set $A \in \mathcal{B} (\mathbb{R})$
and for disjoint sets $A_1 , \ldots , A_n \in \mathcal{B} (\RR)$, the random variables $\Upsilon (A_1), \ldots , \Upsilon (A_n)$ are independent.
Let $\Upsilon_n : \Omega \to \mathcal M_{\rm p}$, $n \in \mathbb N$, be a sequence of point processes defined on a probability space $(\Omega , \mathcal A , \mathbb P)$. This sequence is said to converge weakly to a point process $\Upsilon : \tilde \Omega \to \mathcal M_{\rm p}$ defined on a probability space $(\tilde \Omega , \tilde{\mathcal{A}} , \tilde{\mathbb{P}})$, if and only if for any bounded continuous function $\phi:\mathcal M_{\rm p} \to \mathbb R$ there holds
\[
 \lim_{n \to \infty} \int_\Omega \phi (\Upsilon_n^\omega) \mathbb{P} (\mathrm{d} \omega) = \int_{\tilde \Omega} \phi (\Upsilon^{\tilde{\omega}}) \tilde{\mathbb{P}} (\mathrm{d} \omega) .
\]
Finally, we introduce a characterization for a region of localization. We refer to \cite{TautenhahnV-13} for a discussion of the validity of Assumption~\ref{ass:FVC}.
Roughly speaking, it is satisfied whenever one of the Theorems \ref{thm:loc1}, \ref{thm:loc2}, or \ref{thm:loc:weak} holds.
\begin{assumption} \label{ass:FVC} 
Let $I \subset \mathbb{R}$. We assume that for all $E \in I$ there exists $\Theta > 3d-1$ such that
\[
 \limsup_{L\to \infty} \mathbb{P} \left\{ \forall x,y \in \Lambda_{L} , \ \lvert x-y \rvert_\infty \geq \frac{L}{2} : \lvert  G_{\omega , \Lambda_{L}} (E;x,y) \rvert \leq L^{-\Theta} \right\} = 1 .
\]
\end{assumption}
Let $\Sigma$ denote the almost sure spectrum of the (ergodic) family of operators $H_\omega$, $\omega\in \Omega$. 
\begin{theorem}[\cite{TautenhahnV-13}] \label{theorem:poisson}
 Let Assumption \ref{ass:circulant} be satisfied, $I \subset \Sigma$ be a bounded interval and $E_0 \in I$. Assume that Assumption \ref{ass:FVC} is satisfied in $I$ and Assumption \ref{ass:pos} is satisfied for $E_0$ and some $\kappa \in [0,1/(1+d))$. 
\par
Then the point process $\Xi$, defined in Eq.~\eqref{eq:process}, converges for $L \to \infty$ weakly to a Poisson process on $\mathbb{R}$ with
Lebesgue measure as the intensity measure.
\end{theorem}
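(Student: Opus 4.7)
The plan is to verify the two Kallenberg conditions that guarantee weak convergence of $\Xi^\omega_L := \Xi^\omega_{L,E_0}$ to a Poisson process on $\RR$ with Lebesgue intensity: for every bounded interval $B\subset\RR$,
\[
\EE\bigl(\Xi^\omega_L(B)\bigr) \longrightarrow |B| \qquad\text{and}\qquad \PP\bigl\{\Xi^\omega_L(B)\geq 2\bigr\} \longrightarrow 0 \quad\text{as } L\to\infty.
\]
The scheme is the standard Minami decoupling: fix a scale $\ell=\ell(L)$ with $\ell\to\infty$ and $\ell=o(L)$; partition $\Lambda_L$ into $K_L\sim(L/\ell)^d$ disjoint subcubes $\Lambda_\ell(x_k)$; and form the auxiliary process $\tilde\Xi^\omega_L$ from the rescaled eigenvalues of the subcube operators $\tilde H^{(k)}_\omega := H_{\omega,\Lambda_\ell(x_k)}$.

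\textbf{Reduction and verification of the Kallenberg conditions.}
Using Assumption~\ref{ass:FVC} together with a geometric resolvent expansion between $H_{\omega,\Lambda_L}$ and the block-diagonal operator $\bigoplus_k\tilde H^{(k)}_\omega$, I would show that outside an event of vanishing probability each eigenvalue of $H_{\omega,\Lambda_L}$ in the rescaling window is in bijective correspondence with an eigenvalue of some $\tilde H^{(k)}_\omega$ in the same window, with discrepancy $o(1/|\Lambda_L|)$; this implies that $\Xi^\omega_L$ and $\tilde\Xi^\omega_L$ have the same weak limit. The intensity condition follows from Lipschitz continuity of the IDS (a consequence of Theorem~\ref{thm:wegner} under Assumption~\ref{ass:circulant}) together with Assumption~\ref{ass:pos}: writing $I_L(B):=N^{-1}(N(E_0)+B/|\Lambda_L|)$,
\[
\EE\bigl(\tilde\Xi^\omega_L(B)\bigr) = \sum_{k=1}^{K_L} \EE\bigl(\operatorname{Tr}\chi_{I_L(B)}(\tilde H^{(k)}_\omega)\bigr) \longrightarrow |B|.
\]
For the no-coincidence bound, Corollary~\ref{cor:minami} applied to each subcube yields
\[
\PP\bigl\{\tilde\Xi^\omega_L(B)\geq 2\bigr\} \leq K_L\cdot\tfrac12\bigl(\tfrac{\pi}{\lambda}\bigr)^2 C_{\rm Min}\,|I_L(B)|^2\,|\Lambda_\ell|^2 = O(\ell^d/L^d) \longrightarrow 0,
\]
the rate surviving precisely because $\kappa<1/(1+d)$ in Assumption~\ref{ass:pos}.

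\textbf{Main obstacle.}
The critical new difficulty in the alloy-type setting is that the subcube operators $\tilde H^{(k)}_\omega$ are \emph{not} stochastically independent: the potential $\eta_x$ at $x\in\Lambda_\ell(x_k)$ depends on the i.i.d.\ variables $\omega_i$ with $i\in x-\Theta$, so neighboring subcubes share coupling constants whenever $\Theta$ reaches across their common boundary. Since Assumption~\ref{ass:circulant} forces $\Theta$ to be finite, my approach is to insert a buffer of width $\operatorname{diam}\Theta$ between adjacent subcubes and to replace the shared buffer variables by independent copies. This surgery produces genuinely independent $\tilde H^{(k)}_\omega$ at the cost of a deterministic spectral perturbation supported on a boundary layer of vanishing relative volume; its effect on eigenvalue counts is absorbed by another appeal to Theorem~\ref{thm:wegner}, while cross terms from still-overlapping dependence neighborhoods are dominated using the full two-point form of Theorem~\ref{thm:minami}. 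Calibrating the buffer width, the subcube scale $\ell(L)$, and the exponent $\kappa$ from Assumption~\ref{ass:pos} so that all error terms decay simultaneously is the technical core of the argument, and it is exactly this interplay that fixes the admissible range $\kappa\in[0,1/(1+d))$.
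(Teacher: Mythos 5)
Your proposal takes a genuinely different route from the paper's proof, and it also contains a real error. The paper does not reproduce the decoupling argument at all: it simply verifies the hypotheses of the abstract criterion of Germinet--Klopp \cite{GerminetK-11b} — the Wegner estimate (Theorem~\ref{thm:wegner}, valid under Assumption~\ref{ass:circulant}), the Minami estimate (Theorem~\ref{thm:minami} and Corollary~\ref{cor:minami}), the independence-at-a-distance property (automatic here since $\Theta$ is finite, with range $\operatorname{diam}\Theta$), and the localization input (Assumption~\ref{ass:FVC}) — and then invokes the theorem from \cite{GerminetK-11b} as a black box. Your plan is to re-derive this criterion from scratch via a subcube decoupling, which is precisely the content of the abstract result being cited; this is legitimate but considerably longer.

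The genuine gap is in your stated "Kallenberg conditions." The condition $\PP\{\Xi^\omega_L(B)\geq 2\}\to 0$ is false for the intended limit: a Poisson process with Lebesgue intensity has $\PP\{\Xi(B)\geq 2\}=1-e^{-|B|}(1+|B|)>0$ whenever $|B|>0$, so any sequence converging to it must satisfy the opposite. What your subsequent calculation
\[
K_L\cdot\tfrac12\Bigl(\tfrac{\pi}{\lambda}\Bigr)^2 C_{\rm Min}\,|I_L(B)|^2\,|\Lambda_\ell|^2 \longrightarrow 0
\]
actually bounds is the per-block sum $\sum_k\PP\{\Xi^{(k)}_L(B)\geq 2\}$, and the criterion you want is the Grigelionis one for superpositions of a null array of independent point processes: $\sum_k\PP\{\Xi^{(k)}_L(B)\geq 1\}\to|B|$ and $\sum_k\PP\{\Xi^{(k)}_L(B)\geq 2\}\to 0$, together with asymptotic independence of the blocks. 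A second, related slip is in your independence-recovering "surgery": replacing shared coupling constants by independent copies is a \emph{random} modification of the potential in the buffer layer, not a "deterministic spectral perturbation." The standard fix is to separate the blocks by a corridor of width $\operatorname{diam}\Theta$ (this is exactly what the finiteness of $\Theta$ in Assumption~\ref{ass:circulant} buys you) and control the eigenvalue contribution of the discarded corridor using the Wegner estimate, whose expected count in the rescaled window is $O(\ell^{-1})\to 0$ since the corridor has vanishing relative volume. Finally, your remark that the constraint $\kappa<1/(1+d)$ is what makes the Minami rate survive is not accurate: the Minami bound alone tolerates any $\kappa<1$ after choosing $\ell=L^\beta$ with $\beta$ small; the sharper restriction $\kappa<1/(1+d)$ is needed in the intensity step, where one must compare the finite-volume eigenvalue counting function to the IDS uniformly at scale $|\Lambda_L|^{-1}$.
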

The result of Theorem~\ref{theorem:poisson} follows from Minami's estimate (formulated in Theorem~\ref{thm:minami}) by using an abstract result from \cite{GerminetK-11b}. 
Roughly speaking, the criterion of \cite{GerminetK-11b} states, that for a large class of discrete random Schr\"odinger operators Minami's estimate and a Wegner estimate implies Poisson statistics in any region of localization. 
\section{Reverse H\"older inequality and fractional moments}
\label{s:ESS}
In this section we review a result of Elgart Shamis and Sodin \cite{ElgartSS-12}. They apply the fractional moment method for a large class of discrete alloy-type models. 
The main new ingredient in comparison to other proofs via the fractional moment method is an estimate on the integral of a fractional power of a rational function, 
respectively, an iterated version thereof.  
\par
We would like to point out that the estimate which is effectively used in \cite{ElgartSS-12}, is a reverse H\"older inequality. Such inequalities play an important role
in harmonic analysis, e.g.\ in the theory of Muckenhoupt weights. 
In this section we modify the method of \cite{ElgartSS-12} for the discrete alloy-type model 
at large disorder without the use of the iterated version of the reverse H\"older inequality.
\par
As mentioned before, we assume throughout this section that Assumption~\ref{ass:alloy} is satisfied.
First we state additional regularity assumptions for the model, see  \cite{ElgartSS-12}.
\begin{assumption} \label{ass:hoelder}
 There exists $\alpha \in (0,1]$ and $C_{1} > 0$, such that $\mu ([t-\epsilon , t+ \epsilon]) \leq  C_{1} \epsilon^\alpha$ for all $\epsilon > 0$ and $t \in \RR$.
\end{assumption}
\begin{assumption} \label{ass:moment}
 $\mu$ has a finite $q$-moment, i.e.\ there exists $q > 0$ and a constant $C_{2}>0$ such that $\int \lvert x^q \rvert \mu (\drm x) \leq C_{2}$.
\end{assumption}
\begin{assumption} \label{ass:finite}
 We assume that $\Theta$ is a finite set and that $0 \in \Theta$.
\end{assumption}
The next lemma provides the usual boundedness of fractional moments.
\begin{lemma} \label{lemma:finiteness}
 Let Assumption~\ref{ass:hoelder} be satisfied and $s \in (0,\alpha)$. Then we have for all $b \in \CC$
 \[
  \int_\RR \frac{1}{\lvert x-b \rvert^s} \mu (\drm x) \leq C_{1}^{s/\alpha} \frac{\alpha}{\alpha - s} 
 \]
\end{lemma}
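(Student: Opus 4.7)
The plan is to use the layer-cake (distribution-function) representation of the integral together with the $\alpha$-H\"older bound on $\mu$, then split the resulting one-dimensional integral at the natural cut-off where the trivial bound $\mu(\RR)\le 1$ starts to beat the H\"older bound.

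First I would observe that for any complex $b=\beta+i\gamma$ one has $\lvert x-b\rvert \geq \lvert x-\beta\rvert$, so for any $r>0$
\[
\{x\in\RR\colon \lvert x-b\rvert < r\} \subset [\beta-r, \beta+r],
\]
and hence by Assumption~\ref{ass:hoelder},
\[
\mu\bigl(\{x\colon \lvert x-b\rvert < r\}\bigr) \leq \min\bigl(1,\; C_1 r^\alpha\bigr).
\]
This reduces the complex parameter $b$ to its real part and lets us ignore $\Im b$.

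Next, by the layer-cake formula applied to the non-negative function $x\mapsto \lvert x-b\rvert^{-s}$,
\[
\int_\RR \frac{\mu(\drm x)}{\lvert x-b\rvert^s}
= \int_0^\infty \mu\bigl(\{x\colon \lvert x-b\rvert^{-s}>t\}\bigr)\,\drm t
= \int_0^\infty \mu\bigl(\{x\colon \lvert x-b\rvert < t^{-1/s}\}\bigr)\,\drm t.
\]
Substituting the previous pointwise bound gives
\[
\int_\RR \frac{\mu(\drm x)}{\lvert x-b\rvert^s}
\leq \int_0^\infty \min\bigl(1,\; C_1 t^{-\alpha/s}\bigr)\,\drm t.
\]

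Finally, the two terms in the minimum cross at $t_0 = C_1^{s/\alpha}$, so I split the integral at $t_0$. On $[0,t_0]$ the bound $1$ contributes $C_1^{s/\alpha}$. On $[t_0,\infty)$, since $s<\alpha$ implies $\alpha/s>1$, the integral $\int_{t_0}^\infty C_1 t^{-\alpha/s}\,\drm t$ converges and a direct computation yields $C_1^{s/\alpha}\cdot s/(\alpha-s)$. Summing gives $C_1^{s/\alpha}\,\alpha/(\alpha-s)$, which is exactly the claimed bound.

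There is no substantive obstacle here: the argument is a standard layer-cake estimate, and the only small point to get right is the reduction from complex $b$ to real part (handled by $\lvert x-b\rvert \geq \lvert x-\Re b\rvert$) and the choice of cut-off $t_0 = C_1^{s/\alpha}$, which is pinned down by equating the two terms in the minimum.
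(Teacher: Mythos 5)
Your proof is correct and follows essentially the same route as the paper: reduce to real $b$ via $\lvert x-b\rvert \geq \lvert x - \Re b\rvert$, apply the layer-cake formula, bound the measure of the sublevel set by $\min(1, C_1 t^{-\alpha/s})$, and split the $t$-integral at the cut-off $\kappa = C_1^{s/\alpha}$ where the two bounds cross. The only cosmetic difference is that you motivate the cut-off as the crossing point of the two terms in the minimum, whereas the paper simply introduces a free parameter $\kappa$ and optimizes at the end; the computation is identical.
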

\begin{proof}
We assume $b \in \RR$, if $b \not \in \RR$ we estimate the integrand by replacing $b$ by its real part. Layer Cake gives us
 \[
  I := \int_\RR \frac{1}{\lvert x-b \rvert^s} \mu (\drm x) = \int_0^\infty \mu (\{ x \in \RR \colon \lvert x-b \rvert^{-s} > t \}) \drm t .
 \]
We split the domain of integration for some $\kappa > 0$ according to $[0,\infty) = [0,\kappa) \cup [\kappa , \infty)$ and obtain
\[
  I =\int_0^\kappa \mu (\{ x \in \RR \colon \lvert x-b \rvert^{-s} > t \}) \drm t 
  +\int_\kappa^\infty \mu (\{ x \in \RR \colon \lvert x-b \rvert^{-s} > t \}) \drm t
  .  
\]
Since $\mu$ is a probability measure, we can estimate the first integral by $\kappa$. For the second integral we get due to Assumption~\ref{ass:hoelder}
\begin{align*}
 \int_\kappa^\infty \mu (\{ x \in \RR \colon \lvert x-b \rvert^{-s} > t \}) \drm t &= \int_\kappa^\infty \mu ([b-t^{-1/s} , b+t^{-1/s}]) \drm t \\
   &\leq \int_\kappa^\infty C_{1} t^{-\alpha/s} \drm t \\
   &= C_{1} \frac{s}{\alpha - s} \kappa^{-\alpha / s + 1} .
\end{align*}
If we choose $\kappa = C_{1}^{s/\alpha}$ we obtain the statement of the lemma.
\end{proof}
The main new idea of  \cite{ElgartSS-12}, formulated there in Proposition~3.1, implies the following reverse H\"older inequality.
\begin{proposition}[\cite{ElgartSS-12}] \label{proposition:poly}
Let Assumptions \ref{ass:hoelder} and \ref{ass:moment} be satisfied, $Q_1$ and $Q_2$ be two polynomials of degree smaller or equal $k$, and $s \in (0, q \alpha / \min\{k(4\alpha + q) , \alpha / (2k)\})$. Then there is a constant $\tilde C = \tilde C (\alpha , q , k , s, C_1 , C_2)$ such that
\[
 \left(\int_\RR \frac{\lvert Q_1 (x) \rvert^{2s}}{\lvert Q_2 (x) \rvert^{2s}} \drm \mu (x) \right)^{1/2} \leq \tilde C \int_\RR \frac{\lvert Q_1 (x) \rvert^{s}}{\lvert Q_2 (x) \rvert^{s}} \drm \mu (x)
\]
\end{proposition}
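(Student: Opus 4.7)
The plan is to recognise this inequality as a reverse H\"older statement on the function $g := \lvert Q_1/Q_2 \rvert^s$. Setting $R = Q_1/Q_2$, the claim becomes $\lVert g \rVert_{L^2(\mu)} \leq \tilde C \lVert g \rVert_{L^1(\mu)}$, which is a non-trivial reversal of H\"older's inequality on the probability space $(\RR,\mu)$. The key point is that $g$ is not an arbitrary function: it is controlled by a rational expression of degree at most $k$, and both its singular behaviour (near the zeros of $Q_2$) and its growth at infinity are constrained by the hypotheses on $\mu$.

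First I would establish two separate distributional bounds for $g$. The pointwise estimate $\lvert Q_1(x) \rvert \leq c_1 (1 + \lvert x \rvert)^k$ combined with Assumption~\ref{ass:moment} yields $\mu(\{\lvert Q_1 \rvert > M\}) \lesssim M^{-q/k}$. On the other hand, $\{\lvert Q_2 \rvert < \epsilon\}$ is contained in a union of at most $k$ intervals of length of order $\epsilon^{1/k}$ (up to a normalisation by the leading coefficient of $Q_2$), so Assumption~\ref{ass:hoelder} gives $\mu(\{\lvert Q_2 \rvert < \epsilon\}) \lesssim \epsilon^{\alpha/k}$. A union bound at well-chosen thresholds then produces a polynomial tail $\mu(\{g > t\}) \lesssim t^{-\beta}$ for some $\beta > 2$; the precise smallness assumed on $s$ is exactly what enforces $\beta > 2$, ensuring that $\int g^2 \, \drm\mu$ converges and is controlled by a single tail integral.

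Next I would feed this tail bound into the layer-cake identity to estimate $\int g^2 \, \drm\mu$, and then compare the result with $\int g \, \drm\mu$ by means of a Chebyshev-type splitting at a threshold of order $\lVert g \rVert_{L^1(\mu)}$, so that the right-hand side of the desired inequality emerges as the dominant contribution on the bounded piece, while the heavy-tail piece is controlled by the distributional estimate derived above.

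The main obstacle --- and presumably the content of Proposition~3.1 of \cite{ElgartSS-12} which the authors invoke --- is making the constant $\tilde C$ independent of the polynomials $Q_1,Q_2$ themselves, in particular of their leading coefficients and the positions of their roots. I would handle this by rescaling $Q_1$ and $Q_2$ by normalising constants that transform both sides of the sought inequality in the same way; those constants then cancel out of the ratio $\lVert g \rVert_{L^2}/\lVert g \rVert_{L^1}$, so that only the shape data $(\alpha, q, k, C_1, C_2, s)$ remain in the final bound. Getting this normalisation correct, so that the distributional estimates of the previous step are preserved uniformly, is the one step where I expect real care to be required.
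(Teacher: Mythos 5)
The paper does not give a proof of this proposition: it is stated with a citation to \cite{ElgartSS-12} (their Proposition~3.1) and then used as a black box in the proof of Theorem~\ref{thm:finite_volume}, so there is no in-paper argument to compare against, and I assess your sketch on its own terms. You correctly recognise the statement as a reverse H\"older inequality $\lVert g\rVert_{L^2(\mu)} \leq \tilde C\,\lVert g\rVert_{L^1(\mu)}$ for $g := \lvert Q_1/Q_2\rvert^s$, and the two distributional bounds you extract from Assumptions~\ref{ass:moment} and~\ref{ass:hoelder} are sound upper estimates. The gap is in the final step, and it is more fundamental than the normalisation issue you flag. An upper tail bound $\mu(\{g > t\}) \leq A\,t^{-\beta}$ with $\beta>2$ bounds $\lVert g\rVert_{L^2}$, but your Chebyshev split, optimised over the threshold $T$, only yields
\[
\lVert g\rVert_{L^2}^2 \;\lesssim\; A^{1/(\beta-1)}\,\lVert g\rVert_{L^1}^{(\beta-2)/(\beta-1)},
\]
which is $\lesssim \lVert g\rVert_{L^1}^2$ precisely when $\lVert g\rVert_{L^1} \gtrsim A^{1/\beta}$. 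No such lower bound on $\lVert g\rVert_{L^1}$ is supplied, and no upper tail bound can supply one: a distribution function may drop quickly to a small plateau and then decay, satisfying the tail bound while having $\lVert g\rVert_{L^2}/\lVert g\rVert_{L^1}$ arbitrarily large.

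The leading-coefficient normalisation you propose does not repair this. Take $Q_1\equiv 1$ and $Q_2(x)=x-R$ with $R\to\infty$: both are already normalised, and your union-bound tail estimate gives $\mu(\{g>t\}) \leq C_1\,t^{-\alpha/s}$ \emph{uniformly in $R$}, yet $\lVert g\rVert_{L^1} \asymp R^{-s}\to 0$, so the split diverges. (The reverse H\"older inequality is still true here, because $\lVert g\rVert_{L^2}\asymp R^{-s}$ as well; it is the argument, not the conclusion, that fails --- your tail bound is simply far from sharp when the roots of $Q_2$ recede from the support of $\mu$.) What is actually required is a tail bound in \emph{scale-invariant} form, say $\mu(\{g > t\,\lVert g\rVert_{L^1}\}) \leq A'\,t^{-\beta}$ with $A'$ depending only on $(\alpha,q,k,s,C_1,C_2)$, or equivalently a matching lower-tail estimate $\mu(\{g < \tau\}) \lesssim \tau^\gamma$ in a normalisation compatible with the upper one. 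Passing from your absolute thresholds on $\lvert Q_1\rvert$ and $\lvert Q_2\rvert$ to thresholds measured relative to $\lVert g\rVert_{L^1}$ uses the polynomial structure in a more essential way (a Remez/Cartan-type lower bound relating $\int\lvert Q\rvert^s\,\drm\mu$ to a coefficient norm of $Q$), and is not a concluding technicality to ``get right'' --- it is the substance of the proposition, and it is absent from the sketch.
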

The next statement is contained in \cite{ElgartSS-12} as well. We give a short, direct proof, for discrete alloy-type models, which makes use of Lemma~\ref{lemma:finiteness}
and Proposition~\ref{proposition:poly} only.
\begin{theorem} \label{thm:finite_volume}
 Let Assumption~\ref{ass:hoelder}, \ref{ass:moment} and \ref{ass:finite} be satisfied, $s \in (0, q \alpha / \min\{\lvert \Theta \rvert(4\alpha + q) , \alpha / (2k)\})$ and $\Lambda \subset \ZZ^d$. Then there is a constant $C = C(\alpha , q , \lvert \Theta \rvert , s , C_1 , C_2, u(0))$, such that for all $\lambda > 0$, $E \in \RR$ and $x,y\in \Lambda$ with $x \not = y$
\begin{equation}
 \label{eq:recursion} 
\EE (\lvert G_{\omega , \Lambda} (E;x,y) \rvert^s) \leq \frac{C}{\lambda^s} \sum_{\lvert e \rvert = 1} \EE (\lvert G_{\omega , \Lambda} (E;x,y+e) \rvert^s) .
\end{equation}
 \end{theorem}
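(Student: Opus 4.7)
The plan is to set up a one-step resolvent identity at the site $y$, integrate out the single coupling constant $\omega_y$ via Cauchy--Schwarz, and control the two resulting factors using Lemma~\ref{lemma:finiteness} and the reverse H\"older inequality of Proposition~\ref{proposition:poly}. First I would exploit the relation $(H_{\omega,\Lambda}-E) G_{\omega,\Lambda}(E;\cdot,x)=\delta_x$ (together with the symmetry $G_{\omega,\Lambda}(E;x,y)=G_{\omega,\Lambda}(E;y,x)$) to derive, for $y\neq x$, the pointwise identity
\[
(\lambda\eta_y - E)\, G_{\omega,\Lambda}(E;x,y) = \sum_{\lvert e\rvert = 1,\, y+e\in\Lambda} G_{\omega,\Lambda}(E;x,y+e),
\]
and then pass to $s$-powers using subadditivity of $t\mapsto t^s$ on $[0,\infty)$ for $s\in(0,1)$:
\[
\lvert G_{\omega,\Lambda}(E;x,y)\rvert^s \leq \lvert\lambda\eta_y-E\rvert^{-s}\sum_{\lvert e\rvert=1}\lvert G_{\omega,\Lambda}(E;x,y+e)\rvert^s .
\]

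The heart of the argument is to integrate this inequality first with respect to the single variable $\omega_y$, keeping $\omega_y^\perp:=(\omega_i)_{i\neq y}$ fixed. Because $0\in\Theta$ by Assumption~\ref{ass:finite}, one has $u(0)\neq 0$ and hence $\omega_y\mapsto\lambda\eta_y-E=\lambda u(0)\omega_y+(\lambda c - E)$ is affine with nonzero slope (where $c$ depends only on $\omega_y^\perp$). Moreover, $H_{\omega,\Lambda}-E$ depends affinely on $\omega_y$ through a diagonal perturbation of rank at most $\lvert\Theta\rvert$ supported on $(y+\Theta)\cap\Lambda$, so a Krein/Schur-complement computation combined with Cramer's rule shows that each resolvent entry is a rational function $\omega_y\mapsto P(\omega_y)/Q(\omega_y)$ with $\deg P,\deg Q\leq\lvert\Theta\rvert$ and coefficients depending on $\omega_y^\perp$. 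Applying Cauchy--Schwarz on the $\omega_y$-integral, I would bound the first factor by Lemma~\ref{lemma:finiteness} with exponent $2s<\alpha$, obtaining $\int\lvert\lambda\eta_y-E\rvert^{-2s}\drm\mu(\omega_y)\leq C_{1}^{2s/\alpha}(\alpha/(\alpha-2s))(\lambda\lvert u(0)\rvert)^{-2s}$, and the second factor by Proposition~\ref{proposition:poly} with $k=\lvert\Theta\rvert$:
\[
\Bigl(\int\lvert G_{\omega,\Lambda}(E;x,y+e)\rvert^{2s}\drm\mu(\omega_y)\Bigr)^{1/2}\leq\tilde C\int\lvert G_{\omega,\Lambda}(E;x,y+e)\rvert^s\drm\mu(\omega_y).
\]
Combining these bounds and then integrating over $\omega_y^\perp$ by Fubini yields the claimed recursion \eqref{eq:recursion} with $C=\tilde C\cdot\lvert u(0)\rvert^{-s}\cdot C_{1}^{s/\alpha}(\alpha/(\alpha-2s))^{1/2}$.

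The main obstacle I expect lies in the rigorous verification of the rational structure with the correct uniform degree bound, which is what makes Proposition~\ref{proposition:poly} directly applicable with $k=\lvert\Theta\rvert$: one has to track the Krein-formula expansion carefully to see that both numerator and denominator of the one-variable restriction of $G_{\omega,\Lambda}(E;x,y+e)$ have degree at most $\lvert\Theta\rvert$, independently of $\omega_y^\perp$. The assumption $0\in\Theta$ (hence $u(0)\neq 0$) is essential, since it is the piece of the affine dependence of $\eta_y$ on $\omega_y$ that allows one to extract the full factor $\lambda^{-s}$ from the first integral. The admissible range of $s$ stated in the theorem is then precisely the intersection of the thresholds imposed by Lemma~\ref{lemma:finiteness} ($2s<\alpha$) and by the reverse H\"older inequality of Proposition~\ref{proposition:poly} (in terms of $\alpha$, $q$ and $k=\lvert\Theta\rvert$).
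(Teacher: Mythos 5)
Your proposal is correct and follows essentially the same strategy as the paper's proof: start from the one-step identity $(\lambda\eta_y-E)G_{\omega,\Lambda}(E;x,y)=\sum_{\lvert e\rvert=1}G_{\omega,\Lambda}(E;x,y+e)$, integrate out only $\omega_y$, control a factor $\lvert\lambda\eta_y-E\rvert^{-\text{(something)}}$ by Lemma~\ref{lemma:finiteness}, observe via Cramer's rule that $G_{\omega,\Lambda}(E;\cdot,\cdot)$ restricted to $\omega_y$ is a ratio of polynomials of degree at most $\lvert\Theta\rvert$, invoke Proposition~\ref{proposition:poly}, and finish with Fubini over $\omega_y^\perp$. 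The only difference is a rearrangement of Cauchy--Schwarz: the paper applies it to $\EE_{\{y\}}(\lvert G(x,y)\rvert^{s/2})$ before invoking the resolvent identity (and uses the reverse H\"older inequality on $G(x,y)$ with the proposition's parameter set to $s/2$, together with Lemma~\ref{lemma:finiteness} at exponent $s$), whereas you first pass the resolvent bound inside the $\omega_y$-integral and then apply Cauchy--Schwarz to each summand (using Lemma~\ref{lemma:finiteness} at exponent $2s$ and the reverse H\"older inequality on $G(x,y+e)$ with parameter $s$). Both arrangements are equally valid here: $G(x,y)$ and $G(x,y+e)$ share the same denominator $\det(H_{\omega,\Lambda}-E)$ as polynomials in $\omega_y$, and the condition $2s<\alpha$ your version needs is subsumed by the range $s<\alpha/(2\lvert\Theta\rvert)$ implicit in the hypothesis on $s$.
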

Here we use the convention that $G_{\omega , \Lambda} (E;x,y) = 0$ if $x \not \in \Lambda$ or $y \not \in \Lambda$. Moreover, we note that the set of $\omega \in \Omega$ such that $E \in \RR$ is in the spectrum of $H_{\omega , \Lambda}$ has $\PP$-measure zero. This justifies to deal with real energies.
\begin{proof}[Proof of Theorem~\ref{thm:finite_volume}]
 By definition of $G_{\omega , \Lambda} (E)$ we have for $x \not = y$
 \begin{align*}
0 &= \langle \delta_x , G_{\omega , \Lambda} (E) (H_{\omega , \Lambda} - E) \delta_y \rangle  = \sum_{i \in \Lambda} G_{\omega , \Lambda} (E;x,i) \langle \delta_i , (H_{\omega , \Lambda} - E) \delta_y \rangle \\
&= -\sum_{\lvert e \rvert = 1} G_{\omega , \Lambda} (E;x,y+e) + ( \lambda V_\omega (y) - E) G_{\omega , \Lambda} (E;x,y)
 \end{align*}
Hence,
\[
 \lvert V_\omega (y) - E/\lambda \rvert^s \lvert G_{\omega , \Lambda} (E;x,y) \rvert^s \leq \frac{1}{\lambda^s} \sum_{\lvert e \rvert = 1} \lvert G_{\omega , \Lambda} (E;x,y) \rvert^s .
\]
Next we provide a lover bound on the expectation of the left hand side. By Cauchy Schwarz we have for all $k \in \ZZ^d$
\begin{multline*}
 \EE_{\{k\}} \bigl( \lvert G_{\omega , \Lambda} (E;x,y) \rvert^{s/2} \bigr)^2 \\ \leq 
 \EE_{\{k\}} \bigl( \lvert G_{\omega , \Lambda} (E;x,y) \rvert^{s} \lvert V_\omega (y) - E/\lambda \rvert^s \bigr) \EE_{\{k\}} \bigl( \lvert V_\omega (y) - E / \lambda  \rvert^{-s} \bigr) .
\end{multline*}
Here $\EE_{\{k\}}$ denotes the expectation with respect to the random variable $\omega_k$, i.e.\ $\EE_{\{k\}} (\cdot) = \int_\RR (\cdot) \mu (\drm \omega_k)$
By Lemma~\ref{lemma:finiteness} (and since $0 \in \Theta$) we have 
\begin{align*}
 \EE_{\{y\}} \bigl( \lvert V_\omega (y) - E / \lambda  \rvert^{-s} \bigr) &=
 \int_\RR \frac{1}{\lvert V_\omega (y) - E/\lambda \rvert^s} \mu (\drm \omega_y) \leq \frac{1}{\lvert u(0) \rvert^s} C_{\rm A1}^{s/\alpha} \frac{\alpha}{\alpha - s} .
\end{align*}
Hence,
\[
 \EE_{\{y\}} \bigl( \lvert G_{\omega , \Lambda} (E;x,y) \rvert^{s/2} \bigr)^2 \leq  \frac{C_{\rm A1}^{s/\alpha}}{\lvert u(0) \rvert^s}  \frac{\alpha}{\alpha - s}
 \EE_{\{y\}} \bigl( \lvert G_{\omega , \Lambda} (E;x,y) \rvert^{s} \lvert V_\omega (y) - E/\lambda \rvert^s \bigr) .
\]
By Cramers rule, the Green function is a ratio of two polynomials. More precisely, we have
\[
 G_{\omega , \Lambda} (E;x,y) = \frac{\det C_{y,x}}{\det (H_{\omega , \Lambda} - E)} ,
\]
where $C_{i,j} = (-1)^{i+j}M_{i,j}$, and where $M_{i,j}$ is obtained from the matrix $H_{\omega , \Lambda} - E$ by deleting row $i$ and column $j$. Now we observe that both, the numerator and the denominator, are polynomials in $\omega_y$ of order $k \leq \lvert \Theta \rvert$. By Lemma~\ref{proposition:poly} there is a constant $\tilde C = \tilde C (\alpha , q , \lvert \Theta \rvert , s , C_1 , C_2)$ such that
\[
 \EE_{\{y\}} \bigl( \lvert G_{\omega , \Lambda} (E;x,y) \rvert^s \bigr)^2 \geq {\tilde C}^{-1}
 \EE_{\{y\}} \bigl( \lvert G_{\omega , \Lambda} (E;x,y) \rvert^{2s} \bigr)
\]
Hence,
\[
  \EE_{\{y\}} \bigl( \lvert G_{\omega , \Lambda} (E;x,y) \rvert^{s} \bigr) \leq \tilde C \frac{C_{\rm A1}^{s/\alpha}}{\lvert u(0) \rvert^s}  \frac{\alpha}{\alpha - s}
 \EE_{\{y\}} \bigl( \lvert G_{\omega , \Lambda} (E;x,y) \rvert^{s} \lvert V_\omega (y) - E/\lambda \rvert^s \bigr) .
\]
Putting everything together we obtain the statement of the theorem.
\end{proof}
\begin{remark}
 The conclusion of Theorem~\ref{thm:finite_volume} implies the fractional moment bound as in Ineq.~\eqref{eq:fmb}, if $\EE ( \lvert G_{\omega , \Lambda} (E;x,y) \rvert^s)$ is uniformly bounded and $\lambda$ is sufficiently large. 
This is elaborated e.g.\ in \cite{Graf-94}. 
So the question remains, whether 
\[
\sup\limits_{\Lambda \subset \ZZ^d, x,y \in\Lambda}\EE ( \lvert G_{\omega , \Lambda} (E;x,y) \rvert^s)< \infty  .
\]
An elementary argument how to deduce this uniform bound from \eqref{eq:recursion} in the large disorder regime, is given in Corollaries~2.3 and 2.4 of  \cite{ElgartSS-12}.  
Hence, exponential decay and localization follows.
\end{remark}
\newcommand{\etalchar}[1]{$^{#1}$}

\end{document}